\theoremstyle{plain}         
\newtheorem{lemma}{Lemma}    
\algnewcommand{\LineComment}[1]{\hfill\(\triangleright\) #1}
\newcommand{\iverson}[1]{\bigl[#1\bigr]} 
\newcommand{\slfrac}[2]{\left.#1\middle/#2\right.}
\title{Model--Behavior Alignment under Flexible Evaluation: When the Best-Fitting Model Isn't the Right One}
\author{%
Itamar Avitan$^{1,2,3}$ \quad Tal Golan$^{1,2,3}$ \\
$^1$Department of Industrial Engineering and Management\\$^2$Data Science Research Center\\$^3$School of Brain Sciences and Cognition \\Ben-Gurion University of the Negev\\
\texttt{avitanit@post.bgu.ac.il}\\
\texttt{golan.neuro@bgu.ac.il}
}
\newcommand\blfootnote[1]{%
  \begingroup
  \renewcommand\thefootnote{}\footnote{#1}%
  \addtocounter{footnote}{-1}%
  \endgroup
}
\begin{document}
\maketitle

\begin{abstract}
Linearly transforming stimulus representations of deep neural networks yields high-performing models of behavioral and neural responses to complex stimuli. But does the test accuracy of such predictions identify genuine representational alignment? We addressed this question through a large-scale model-recovery study. Twenty diverse vision models were linearly aligned to 4.5 million behavioral judgments from the THINGS odd-one-out dataset and calibrated to reproduce human response variability. For each model in turn, we sampled synthetic responses from its probabilistic predictions, fitted all candidate models to the synthetic data, and tested whether the data-generating model would re-emerge as the best predictor of the simulated data. Model recovery accuracy improved with training-set size but plateaued below 80\%, even at millions of simulated trials. Regression analyses linked misidentification primarily to shifts in representational geometry induced by the linear transformation, as well as to the effective dimensionality of the transformed features. These findings demonstrate that, even with massive behavioral data, overly flexible alignment metrics may fail to guide us toward artificial representations that are genuinely more human-aligned. Model comparison experiments must be designed to balance the trade-off between predictive accuracy and identifiability---ensuring that the best-fitting model is also the right one.
\end{abstract}


\section{Introduction}
The search for mechanistic explanations of human cognition, in combination with rapid advances in deep learning, has motivated the use of stimulus representations in pretrained neural networks as models of the biological representation of complex stimuli. Even without modification, activation patterns in artificial neural networks (ANNs) trained on visual tasks show surprising correspondence with cortical visual representations \cite{khaligh-razavi_deep_2014,cichy_comparison_2016,cadieu2014, king_similarity_2019} and visual perceptual judgments \cite{peterson_adapting_2017,jozwik_deep_2017,peterson_evaluating_2018, zhang_unreasonable_2018,aminoff2022_contextual_CNN, king_similarity_2019,storrs_diverse_2021,veerabadran_subtle_2023}. When evaluation is made flexible by fitting linear weights to improve the alignment between ANN representations and brain \cite{yamins_performance-optimized_2014,guclu_deep_2015,storrs_diverse_2021,conwell_large-scale_2024} or behavioral data \cite{peterson_evaluating_2018,battleday_capturing_2020,daube_grounding_2021,muttenthaler_human_2022}, this approach often achieves predictive accuracy exceeding that of any other computational model.
In some neuroscientific applications (e.g., brain--computer interfaces), accurate prediction is useful regardless of the underlying mechanism. In contrast, basic-science studies in computational neuroscience often rely on the assumption that a neural network whose representations are more predictive of brain or behavioral data is a better model of the mechanisms underlying the observed biological data. For models evaluated without further data-driven fitting, high predictive accuracy occurring by chance is unlikely. However, once flexible, data-driven fitting procedures are employed, an important question arises: does predictive accuracy under flexible evaluation reflect genuinely shared representations? \cite{han_system_2023,soni_conclusions_2024,kriegeskorte_representational_2008,geirhos_beyond_2020}. This question carries weight since there are good reasons to employ flexible evaluation. A complete yet fully emergent representational alignment is unlikely even for models whose processing is qualitatively similar to that of humans. Furthermore, inter-individual variability may also motivate flexible alignment metrics \cite{cao_explanatory_2024,thobani2025modelbrain}. And yet, flexible evaluation may incur a hidden cost: when each model is allowed to adjust to best fit brain or behavioral data, predictive accuracy---even when obtained with held-out data---may no longer serve as a meaningful index of representational alignment.

Here, we employ a \emph{model recovery} approach to test whether predictive accuracy, as measured using current analytic methods, is indicative of the probed representations. Specifically, we use the behavioral THINGS odd-one-out dataset. We fit a linear transformation for each of a diverse set of ANN representations to predict the empirical human judgments (Fig.~\ref{fig:accuracy_metro}). Then, in each simulation, we sample a synthetic behavioral dataset from one of the fitted models, fit each model to the synthetic data as if it were a real experiment, and compute the models' cross-validated prediction accuracy with respect to this synthetic data. We then ask: does predictive accuracy correctly recover the data-generating model?

 \begin{figure}[tb]
  \centering
  \includegraphics[width=\linewidth]{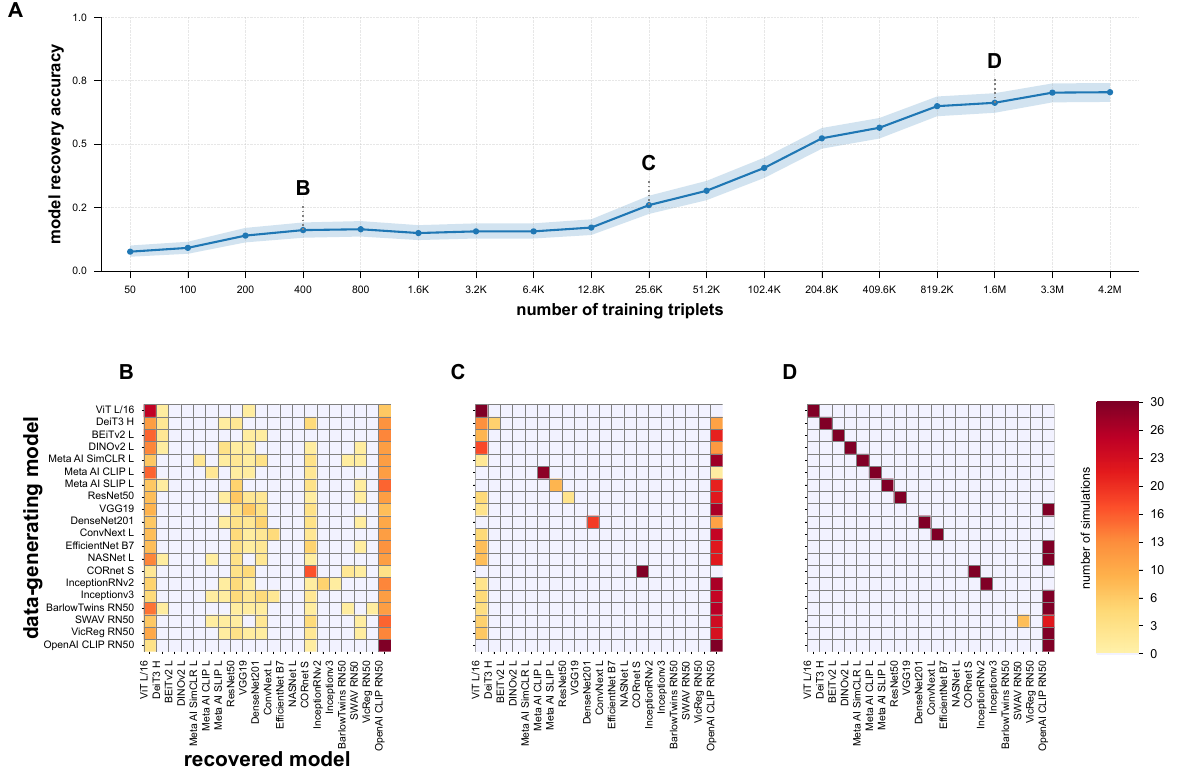}
  \caption{\textbf{(A)} Model recovery accuracy under \emph{linear probing} with a $p \times p$ transformation matrix across different dataset sizes. \textbf{(B,~C,~D)}~Confusion matrices at three training set sizes (400, 25.6K, and 1.6M triplets). Each matrix row corresponds to the data-generating model, and each matrix column to the recovered model. Diagonal entries represent correct model recovery. \textbf{Model recovery accuracy does not reach 80\% even for millions of triplets.}
 }
  \label{fig:model_recovery_full_W}
\end{figure}

\paragraph{Related Work.} 
There is growing concern about the ability of commonly used representational alignment metrics to accurately identify underlying representational structures  \cite{kornblith_similarity_2019,storrs_diverse_2021,han_system_2023,davari_reliability_2022, conwell_large-scale_2024, prince_representation_2024, soni_conclusions_2024, schaeffer_does_2025, ding_grounding_2021, bo_evaluating_2025, klabunde_similarity_2025,braun_not_2025}. The gold-standard approach for evaluating such risk is model recovery simulations, where data is simulated using the predictions of each of the models in turn to test whether the evaluation procedure identifies the data-generating model \cite{pitt_when_2002,palminteri_importance_2017, jonas_could_2017,wilson_ten_2019}. However, applying model recovery analyses to comparisons of ANNs as models of biological representations has largely been limited to simulations using noiseless ANN activations as observed data.

\citet{kornblith_similarity_2019} compared activations from ANNs trained with different random initializations and found that flexible metrics such as linear encoding models and canonical correlation analysis failed to distinguish among specific layers, whereas an inflexible metric (centered kernel alignment, CKA) succeeded. However, since the flexible metrics were not cross-validated, their failure in the large $p$ regime could be attributed to overfitting. 

\citet{han_system_2023} compared the ability of CKA and linear regression to identify ANN architectures across initializations. When the ground-truth model was included among the candidates, CKA ranked it highest. CKA did not differentiate well between model families (convolutional networks and transformers) when the ground-truth model was held out. Cross-validated linear regression identified most---but not all---models when the ground-truth architecture was present. It was somewhat less likely than CKA to misidentify the model family when the ground-truth model was held out. However, as \citet{han_system_2023} emphasize, these evaluations were conducted in an intentionally idealized setting: the observed data in each simulation were deterministic ANN activations, subject to no measurement noise or trial-to-trial variability.

\citet{schutt_statistical_2023} demonstrated successful recovery of different cortical areas with non-flexible representational similarity analysis (RSA) using realistic, subsampled fMRI and calcium-imaging data, treating areas as alternative models. However, their validation of flexible RSA using ANN activations focused on calibration of statistical inference rather than evaluating model recovery performance in noise-calibrated settings.

Overall, the current literature does not alleviate the concern that model comparison may be inaccurate when deep neural networks are flexibly evaluated against real, noisy biological data. On real data, model comparisons are strongly constrained by signal-to-noise ratio. Specifically, without a calibrated noise model, one cannot obtain a realistic estimate of model recovery accuracy---the probability of correctly identifying the ground-truth model among the alternatives. Model recovery accuracy is a key diagnostic for model comparative procedures \cite{wilson_ten_2019}, generalizing statistical power ($1-\beta$) to comparisons among more than two competing alternatives.

Faithfully simulating neural data is difficult because it must capture multivariate noise and signal correlation structures. Here, we turn to simulating realistic \emph{behavioral} experiments involving large datasets of discrete similarity judgments. Similarity judgments have long been used to infer latent spaces, most famously via multidimensional scaling (MDS) \cite{shepard_analysis_1962}. Online testing enables such experiments at scale. Specifically, \citet{hebart_revealing_2020} introduced the THINGS odd-one-out dataset \cite{hebart_things_2019,hebart_things-data_2023}, consisting of 4.7 million responses. In each trial, participants viewed three distinct images and selected the one they considered to be the odd one out. The images were randomly drawn from a standardized set of 1,854 photographs of recognizable objects \cite{hebart_things_2019}. The odd-one-out task is robust to differences in how participants use rating scales \cite{hebart_revealing_2020} and is straightforward to model probabilistically. Embedding-based models such as SPoSe \cite{hebart_revealing_2020} and VICE \cite{muttenthaler_vice_2022} capture judgments in THINGS odd-one-out with high accuracy. However, they are \emph{non-image-computable}: they lack a forward mapping from raw pixels and thus cannot generalize to novel stimuli, and more importantly, do not explain how the representations arise. Neural-network-based models circumvent these limitations by providing fully image-computable candidate representations \cite{sucholutsky_getting_2025,peterson_evaluating_2018,khaligh-razavi_deep_2014,konkle_self-supervised_2022,yamins_performance-optimized_2014}. In a large-scale study, \citet{muttenthaler_human_2022} used the representations of pre-trained ANNs to predict the human judgments in THINGS odd-one-out either without further fitting (``zero-shot'') or under linear transformations (``linear probing''). They found that flexible evaluation substantially improved prediction performance; a CoAtNet \cite{coat_net_pham2023} trained on image--text alignment \cite{radford_learning_2021} and a supervised vision transformer classifier were the top performers. This tie between functionally distinct models raises concerns that---even with massive data---predictive gains from linear probing may come at the cost of reduced identifiability.

\paragraph{Our contributions:}
\begin{enumerate}
\item We formalize and conduct large-scale model recovery simulations for representational alignment in discrete behavioral tasks, using neural networks reweighted to mimic human judgment patterns and calibrated to match the human noise ceiling.
\item We show that---even with millions of training triplets---standard linear probing fails to reliably recover the data-generating model, with recovery plateauing below 80\% accuracy (Fig.~\ref{fig:model_recovery_full_W}). This result challenges the interpretability of predictive accuracy under flexible evaluation.
\item We identify two sources of model misidentification: alignment-induced shifts in representational geometry and elevated post-alignment effective dimensionality.
\item We demonstrate that even with substantial datasets of odd-one-out judgments, there is a sharp trade-off between predictive accuracy and model identifiability. While flexible evaluation increases the apparent alignment with human responses, it can compromise model identifiability by diminishing inter-model distinctions.
\end{enumerate}


\section{Methods} \label{sec:methods}
\subsection{Mapping neural network representations to human odd-one-out judgments} 
We followed the data-analytic approach of \citet{muttenthaler_human_2022}, with one notable modification (see \emph{Regularization} below). For each pre-trained ANN, we extracted the final representational layer (see Appendix~\ref{app:Implementation}) activation matrix $\mathbf{X} \in \mathbb{R}^{n \times p}$, where $n = 1{,}854$ denotes the number of images and $p$ the number of units. We then applied a model-specific learnable linear transformation $\mathbf{W}\in\mathbb{R}^{p \times p}$ to produce the transformed representation matrix $\mathbf{X}\mathbf{W}$.

We computed a representational similarity matrix (RSM) $\mathbf{S} \in \mathbb{R}^{n \times n}$, where the similarity between each pair of stimuli is defined as the inner product between their transformed representations:
\begin{equation}
\mathbf{S} = (\mathbf{X}\mathbf{W})(\mathbf{X}\mathbf{W})^\top = \mathbf{X} \mathbf{W} \mathbf{W}^\top \mathbf{X}^\top
\end{equation}

When $\mathbf{W}$ is set to the identity matrix, the RSM directly reflects the model's representational geometry, enabling parameter-free (``zero-shot''; \cite{muttenthaler_human_2022}) evaluation of the model against human judgments. When $\mathbf{W}$ is optimized to predict human judgments, the RSM flexibly adjusts to best fit the human representational geometry (``linear-probing''; \cite{muttenthaler_human_2022}).

For each trial in the human data, the similarities among the three presented images determine the model's predicted choice: the odd-one-out is the stimulus that is \emph{not} part of the most similar pair. To obtain probabilistic predictions of odd-one-out judgments, a softmax is applied over the pairwise similarities within each triplet \cite{muttenthaler_human_2022}. Given a trial with image triplet $\{a, b, c\}$ and representations of model $M$, the probability of choosing image $a$ as the odd-one-out is defined by
\begin{equation} 
\label{equation:odd-one-out}
 p(\mathrm{odd\text{-}one\text{-}out} = a \mid\mathrm{triplet} = \{a, b, c\},M) = \frac{\exp({S_{b,c}/T})}{\exp(S_{a,b}/T)+\exp(S_{a,c}/T)+\exp(S_{b,c}/T)}.
\end{equation}

Here, $T$ is a temperature parameter held constant during fitting and adjusted later during calibration.

To fit $\mathbf{W}$ to choice data, we used full-batch L-BFGS \cite{liu_limited_1989} to minimize the negative log-likelihood of the probabilistic predictions plus a regularization term:%
 \begin{equation}
      \mathbf{W}^*=\underset{\mathbf{W}}{\arg\min} -\frac{1}{N_{trials}} \sum_{i=1}^{N_{trials}} \log \underbrace{p(\mathrm{odd{-}one{-}out}=r_i \mid \{a_i,b_i,c_i\},M)}_{\text{model prediction in trial $i$}}  + \lambda \mathcal{R}(\mathbf{W}),
 \end{equation}
where $a_i$, $b_i$, and $c_i$ index the images presented in the i-th trial, $r_i$ the corresponding human odd-one-out choice.

\paragraph{Regularization.}
The Frobenius-norm-based regularization of $\mathbf{W}$, as used in \cite{muttenthaler_human_2022}, can degrade performance below zero-shot levels when strong penalties are applied. We replaced the regularization term with one that shrinks $\mathbf{W}$ toward a scalar matrix (see also \cite{muttenthaler_aligning_2025}).
\begin{equation}
    \label{eq:scalar_regularization}
    \mathcal{R}(\mathbf{W}) = \displaystyle\min_{\gamma} \|\mathbf{W} - \gamma \mathbf{I}\|_F^2 = \|\mathbf{W}\|_F^2 - \frac{(\operatorname{tr}(\mathbf{W}))^2}{p}.
\end{equation}
An analytic derivation is provided in Appendix~\ref{appendix:regularization}, and an empirical comparison to Frobenius-norm regularization is shown in Figure~\ref{fig:suppfig:Reg_violin}.

\paragraph{Calibration.} While probabilistic models are often calibrated to minimize their negative log probability on held-out data \cite{guo2017_calibration_2017}, here we adjust the temperature parameter to ensure that the variability of simulated responses matches that of human judgments. When different participants judge the same randomly sampled triplet, their responses agree only about two-thirds of the time \cite{hebart_revealing_2020}. To reproduce this variability in simulation, we used responses from the THINGS odd-one-out noise ceiling experiment, in which 30 participants judged the same 1,000 triplets. We estimated the noise ceiling using a leave‑one‑subject‑out procedure: for each triplet, we removed one participant’s response, took the majority vote of the remaining 29, and recorded whether the held‑out answer matched that vote. We then repeated this step for all participants and averaged the match rates across all triplets (see Appendix~\ref{noiseceiling} for mathematical formulation). For each fitted model, we tuned the temperature parameter \(T\) so that, when sampling responses to these 1,000 triplets according to the model's probabilistic predictions, the resulting prediction accuracy matched the human leave-one-subject-out noise ceiling estimate (67.8\%). This estimate was computed by predicting each experimental trial using the most common response among the other participants who viewed the same triplet. See Appendix~\ref{appendix:calibration} for implementation details. Note that, unlike standard calibration---which increases predictive entropy in less accurate models---this procedure matches each model's response variability to the level observed in human judgments, independently of its predictive accuracy.

\subsection{Model recovery experimental setup}
\label{subsec:model_recovery_experimental_setup}
To evaluate the identifiability of alternative neural network models of human perceptual representation, we simulated model-comparison experiments in which behavioral data---normally collected from humans---was replaced with simulated responses generated by one of the models. Within a simulation, synthetic behavioral data were compared to the predictions of each of the candidate models using flexible model-behavior evaluation (i.e., linear probing). Suppose the widely employed analytic approach of linearly transforming neural network representations is valid. In that case, the specific model that has generated the data in a simulation should achieve the highest predictive accuracy, thereby supporting correct model recovery (see Fig.~\ref{fig:illustrations} for a visual illustration of the process). This setup is analogous to data distillation: the candidate models act as ``students'' attempting to approximate the input-output function of a ``teacher''---the data-generating model. Model recovery is successful when the best-performing student is the teacher itself, rather than an alternative model.

Importantly, real human data (THINGS odd-one-out) was used only for shaping the predictions of the data-generating models; during model comparison, each model was fitted from scratch to the synthetic data, closely emulating the constraints of real data analysis, where neither the ground-truth model nor its model-to-behavior mapping parameters are known.

\paragraph{Human aligned models.} 
\label{model-human-alignment}
We assembled a set of 20 ANNs of diverse architectures and training tasks (model details in Table~\ref{tab:probing_accuracies}). To generate synthetic data under the hypothesis that neural network representations can and should be linearly transformed to match empirical behavioral data, we first aligned each model to the THINGS odd-one-out training dataset (4.5 million triplets). For each model $M$, we used three-fold cross-validation over disjoint image subsets \cite{muttenthaler_human_2022} to select the optimal regularization hyperparameter $\lambda_{M \to \tau}$, then fitted a model-specific transformation matrix $\mathbf{W}_{M\to \tau}$ ($M\to \tau$ denotes mapping model $M$ to THINGS odd-one-out, $\tau$), and finally calibrated the model's temperature to reproduce the human noise ceiling (see \emph{Calibration} above). This procedure was applied to each neural network model independently. The resulting aligned models represent each neural network's best approximation to human judgments under a linear model-to-behavior mapping assumption, and serve exclusively as data-generating models in the simulations that follow.

\paragraph{Simulated model comparison.}
Given a set of random triplets~(sampling details in Appendix~\ref{appendix:sampling_triplets}), we designated one of the 20 aligned models as the \emph{data-generating model} and used its probabilistic predictions to synthesize human responses. Specifically, the data-generating model defined a categorical distribution over the three items of each triplet, from which we sampled a single response, emulating the experimental paradigm employed by \citet{hebart_revealing_2020}, in which each triplet was presented to a single participant. For each simulation, this procedure resulted in sets of discrete responses for the training, validation, and test triplets.

Once the synthetic responses had been generated, all 20 models, including the data-generating model, were independently fitted to the synthetic training data \emph{from scratch}, with $\mathbf{W}$ initialized to the identity matrix. For each model, an optimal regularization hyperparameter was selected using the validation triplets, $\mathbf{W}$ was optimized on the training triplets, and prediction accuracy was evaluated on the test triplets. Because the data-generating model was calibrated, the human noise ceiling bounded prediction accuracy from above. However, if the evaluated model's predictions diverged from those of the data-generating model, the prediction accuracy could be lower. Even the data-generating model was not guaranteed to attain the noise ceiling, since it was evaluated on the test triplets after fitting $\mathbf{W}$ to limited synthetic responses rather than the full THINGS odd-one-out dataset.

In each simulation, this procedure was repeated across the three cross-validation folds (testing generalization to new images~\cite{muttenthaler_human_2022}), averaging the resulting prediction accuracy. For a model recovery simulation to be successful, the mean test prediction accuracy of the model that generated the data must be the highest among all of the models.
 
\paragraph{Model recovery accuracy.} For each stimulus set size, we repeated the simulation using 30 different random seeds to sample the stimuli. For each seed, we treated each of the 20 models as the data-generating model in turn, yielding 600 simulations per stimulus set size. Model recovery accuracy was defined as the proportion of simulations in which the data-generating model achieved the highest cross-validated test accuracy among all candidate models. We summarize our results in a confusion matrix defined as follows (full formulation is provided in Appendix~\ref{appendix:notations}):

Let $\mathcal{M}=\{M_1,M_2,\dots,M_N\}$ be a set of $N$ models, with indices $i,j\in\{1,\dots,N\}$. Let $M_{i \to \tau}$ mark model $M_i$ as the data-generating model after it has been fitted and calibrated on human responses $\tau$. 
For each simulated dataset $d \in \{1, \dots, D\}$ we define $M_{j \to i}^{(d)}$ the candidate model $M_j$ fitted to the simulated predictions of $M_{i \to \tau}$ on simulated dataset $d$. Let $\operatorname{Acc}(M_{j \to i}^{(d)} \mid M_{i \to \tau})$ be the predictive accuracy achieved by candidate model $M_j$ on the $d$-th test set responses generated by the ground-truth model $M_{i \to \tau}$ (see definition in Eq.~\ref{eq:pred_acc_def}). The model confusion matrix is defined by:
\begin{equation}
C_{ij}
= \sum_{d=1}^{D}
    \iverson{\operatorname{Acc}(M_{j\to i}^{(d)} \mid M_{i \to \tau}) = \max_{\mathclap{m\in \{1,\dots,|\mathcal{M}|\}}}\operatorname{Acc}(M_{m\to i}^{(d)} \mid M_{i \to \tau})}, \quad C \in \mathbb{N}^{|\mathcal{M}| \times|\mathcal{M}|}.
\label{eq:confusion}
\end{equation}%

Entry $C_{ij}$ denotes the number of simulated datasets (out of $D$) for which candidate model $M_j$ was the best predictor of data generated by model $M_i$. \emph{Model recovery accuracy} is the empirical probability of correctly identifying the ground-truth model (chance level $= 1/{|\mathcal{M}|}$; perfect recovery $= 1$):
\begin{equation}
\text{Model Recovery Accuracy}
= \slfrac{\displaystyle\sum_{i=1}^{|\mathcal{M}|} C_{ii}}{\displaystyle\sum_{i=1}^{|\mathcal{M}|}   \sum_{j=1}^{|\mathcal{M}|} C_{ij}}.
\label{eq:prec}
\end{equation}


\section{Results}

\begin{figure}[h]
    \centering
    \includegraphics[width=\linewidth]{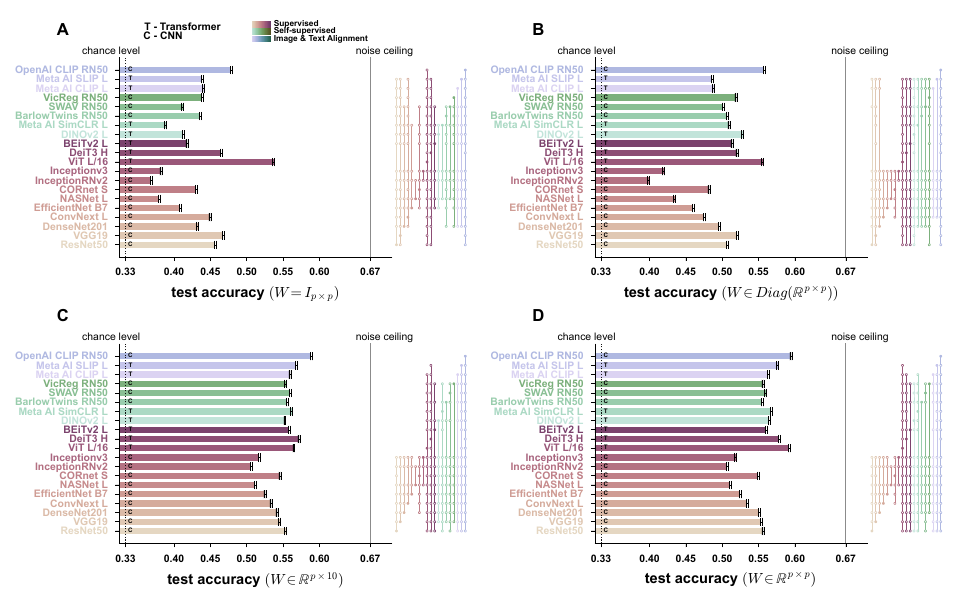}
    \caption{Model test prediction accuracy on the THINGS odd-one-out dataset across varying levels of evaluation flexibility. 
    \textbf{(A)} Zero-shot evaluation (using each model's original embedding).
    \textbf{(B)} Linear probing with a diagonal transformation matrix, fitting $p$ parameters. 
    \textbf{(C)} Linear probing with a $p \times 10$ rectangular transformation matrix. 
    \textbf{(D)} Linear probing with a $p \times p$ full matrix.\\ Significance plots: a filled dot connected to an open dot indicates that the filled-dot model had significantly higher accuracy ($\text{p-value}< 0.05$, sign test, Bonferroni-corrected across 190 comparisons).}
    \label{fig:accuracy_metro}
\end{figure}

\subsection{Comparing model predictions to empirical behavioral responses.}
Before conducting the simulation study, we examined the prediction accuracy of the 20 candidate models on the THINGS odd-one-out dataset under both zero-shot and flexible evaluation settings (Fig.~\ref{fig:accuracy_metro}, Table~\ref{tab:probing_accuracies}; 3-fold cross-validation with disjoint image sets). As in \citet{muttenthaler_human_2022}, linear probing yielded higher accuracy than zero-shot evaluation. Furthermore, greater flexibility---moving from a diagonal to a rectangular to a full $\mathbf{W}$---consistently yielded additional gains. 

Under the most flexible evaluation (full $\mathbf{W}$; Fig.~\ref{fig:accuracy_metro}D), several models achieved near–noise-ceiling predictive accuracy of human responses, with no single model performing significantly better than the others.
Note that if the analytic strategy were guided solely by prediction accuracy, the most flexible evaluation would appear to be the obvious choice.

\begin{figure}[t]
    \centering
    \includegraphics[]{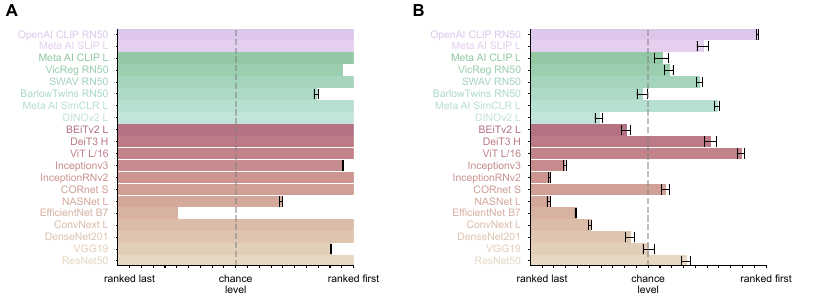}
    \caption{\textbf{(A)} Mean rank of each model's predictive accuracy when it generated the data. A mean rank above 1 indicates systematic misidentification---other models more often achieved higher predictive accuracy on data it produced. 
    \textbf{(B)} Mean rank when the model did not generate the data. In the absence of bias, average ranks should be near chance level (dashed line). \textbf{Model misidentification is systematic---biased toward some models and away from others.}
     All results were computed on simulated datasets with 4.2M training triplets. }
    \label{fig:ranking_figure}
\end{figure}

\subsection{Model recovery simulations}
\paragraph{Recovery accuracy improves with data size but plateaus below 80\%.} 
We ran simulations across 18 training set sizes (i.e., the number of synthetic triplets used to fit $\mathbf{W}$ in each fold), logarithmically spaced between 50 and 4.2 million training triplets. As described in Section~\ref{subsec:model_recovery_experimental_setup}, in each simulation, synthetic data were generated from a model aligned to the full THINGS odd-one-out dataset via a fitted $p \times p$ matrix, and the candidate models competed to predict the synthetic responses, each fitting a $p \times p$ matrix to the synthetic training set, and then tested on held-out synthetic responses. Model recovery accuracy as a function of training set size is shown in Figure~\ref{fig:model_recovery_full_W}A. For small datasets (i.e., those with thousands of triplets), model recovery accuracy remained below 20\%. Recovery accuracy increased with dataset size; however, even with 4.2 million training triplets, it did not reach 80\%.

\paragraph {Controlling transformation dimensionality does not mitigate model misidentification.} \label{par:dim_recovery}
One plausible cause for the limited model recovery is differences in the number of adjustable parameters: some models have more units in their final representational layer, resulting in a greater number of adjustable parameters in $\mathbf{W}$. These models might better fit the data regardless of its source. However, when we reran the simulations using only the top 500 principal components of each model's representation as features (thus fixing the parameter count in $\mathbf{W}$ to $500 \times 500$), model recovery accuracy did not improve (Fig.~\ref{fig:model_recovery_PCA_500}). 


\paragraph{Model recovery performance plateaus despite objective-driven representational divergence}
Recent work by \citet{lampinen_learned_2024} showed that different training objectives induce distinct representational biases. To test how these biases affect model recovery, we expanded the model set to include 10 additional models, primarily image–text-aligned (Table \ref{tab:new_models_accuracy}). We then evaluated model recovery accuracy using the expanded model set (see Fig.~\ref{sup:30_recovery}). As expected, model recovery accuracy declined with the expanded model set. It plateaued near 70\%, even with 4.2 million training triplets. Next, we categorized each model as supervised, unsupervised, or image–text-aligned and performed a between-objective model recovery analysis (see Fig.~\ref{sup:grouped_recovery}). Even with 4.2 million training triplets, objective-based recovery reached only 73.7\% (Fig.~\ref{sup:grouped_recovery}D), despite being significantly easier than model-based recovery. These results indicate that, although initial internal representations differ in objective-specific biases~\cite{lampinen_learned_2024}, linear probing can obscure objective-specific differences in representational geometry. Grouping models by architecture type (convolutional vs. vision transformers) yielded similar results (70.3\% accuracy, Fig.~\ref{sup:grouped_recovery}).

\paragraph{Certain models dominate recovery---even when incorrect.} Inspection of confusion matrices (Fig.~\ref{fig:model_recovery_full_W}B--D) indicates that the error is systematic: one model, OpenAI CLIP ResNet-50, was consistently misattributed as the ground-truth model. Would model recovery reach 100\% accuracy if this model were excluded? To test this, we measured the mean rank of each model's predictive accuracy when it served as the data-generating model (Fig.~\ref{fig:ranking_figure}A). Four of the 20 models had a mean rank above 2 (i.e., worse than second place), indicating that, on average, when these models generated the data, more than one competitor achieved higher predictive accuracy. We also computed the mean rank of each model's predictive accuracy when it was \emph{not} the data-generating model (Fig.~\ref{fig:ranking_figure}B). This revealed considerable variation in the models' propensity to be falsely identified as the data-generating model.

\subsection{Representational geometry-based causes of model misidentification}
\begin{wrapfigure}[23]{r}{0.5\textwidth}
  \vspace{-15pt}
    \centering
    \includegraphics[width=0.5\textwidth]{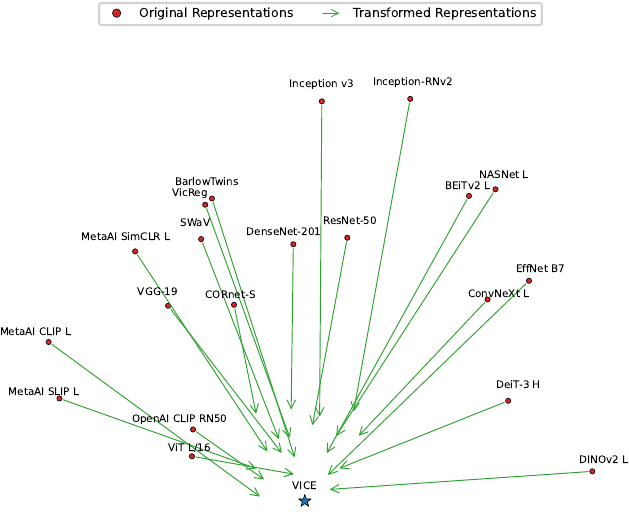}
   
    \caption{Shifts in model representations after linear probing, visualized using multidimensional scaling (MDS). Dots mark each model's original final representations. Arrowheads mark the aligned representations. VICE \cite{muttenthaler_vice_2022}, an embedding model fitted to THINGS odd-one-out, serves as a proxy for human-like representation in this visualization.}
     \label{fig:MDS_RSA}
\end{wrapfigure}
The limited model-recovery accuracy prompted us to examine factors that might cause or modulate model misidentification. Specifically, we assessed how the representational geometry of each model, defined by the set of pairwise distances among its stimulus representations \cite{schutt_statistical_2023}, was altered by linear probing. For each model, we computed a representational dissimilarity matrix (RDM) consisting of squared Euclidean distances among its final representational layer activation patterns in response to the 1,854 THINGS object images. This measurement was conducted both before and after aligning these representations to THINGS odd-one-out with a full $\mathbf{W}$. As a surrogate of human representations, we also included the RDM of VICE \cite{muttenthaler_vice_2022}, an embedding model directly fitted to THINGS odd-one-out. We quantified within- and between-model RDM similarity using whitened Pearson correlation~\cite{diedrichsen2021_corr_cov, van_den_bosch_python_2025} (Fig.~\ref{suppfig:RSA}) and employed multidimensional scaling (MDS) to summarize and visualize these results (Fig.~\ref{fig:MDS_RSA}). As expected, all models' representational geometries shifted toward that of VICE following the alignment. Models that best predicted human judgments (e.g., ViT-L/16 and OpenAI CLIP-ResNet50) exhibited representational geometries more similar to VICE from the outset. By contrast, models that often failed to be recovered (e.g., EfficientNet B7, NASNet Large, or Inception-ResNet-V2) had initial representational geometries more distant from those of VICE and underwent substantial shifts in representational geometry as a result of the linear transformation. 

\paragraph{Substantial alignment-induced representational shifts are related to poor model recovery outcomes}
To test whether alignment-induced representational shifts predict model-specific recovery outcomes, we used a linear regression analysis with shift magnitude and other geometric and architectural features (Table~\ref{tab:models_properties}) as predictors. The dependent variable was defined as the difference in predictive accuracy between the data-generating model and one alternative candidate model, computed separately within each simulation. This accuracy difference served as a continuous measure of the separability of each model pair.

The analysis revealed three significant predictors (Bonferroni-corrected over 10,000 bootstrap tests): First, the shift magnitude of the candidate model---how much its geometry changed under alignment---positively predicted accuracy differences ($\beta = 0.495$, $\text{p-value} = 0.02$), suggesting that models more altered by linear probing were less predictive of responses generated by other models. Conversely, the shift magnitude of the data-generating model negatively predicted accuracy differences ($\beta = -0.2510$, $\text{p-value} = 0.01$), indicating that substantially adapted models yielded synthetic responses more easily predicted by other models than by themselves.

The third significant predictor was the \emph{effective dimensionality} (ED) of the data-generating model representations after alignment to THINGS odd-one-out. ED quantifies how many feature space dimensions account for meaningful variance \cite{del_giudice_effective_2021}; we used a standard estimator as detailed in Appendix~\ref{appendix:effective_dimensionality}. Recent work has linked higher ED of neural network representations to improved prediction of visual cortical responses \cite{elmoznino_high-performing_2024}, though see \cite{conwell_large-scale_2024,canatar_spectral_2023} for contrasting views. Higher post-transformation ED in the data-generating model negatively predicted accuracy differences ($\beta = -0.455$, $\text{p-value} = 0.01$), suggesting that models whose aligned representations are high-dimensional are less likely to be correctly recovered.

\subsection{The predictive-accuracy--model-identifiability trade-off}

\begin{wrapfigure}[22]{l}{0.5\linewidth}
  \vspace{-15pt}
    \centering\includegraphics[]{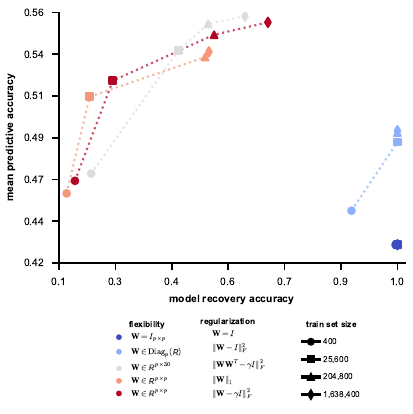}
    \caption{Model recovery accuracy vs. predictive accuracy (averaged across models), as a function of flexibility level and dataset size.}
    \label{fig:accuracyVSrecovery}
\end{wrapfigure}

A seemingly straightforward solution to the problem of model misidentification is to restrict evaluation flexibility---either by using zero-shot predictions or by applying linear probing with fewer free parameters in the transformation matrix. However, the mean predictive accuracy across the twenty models drops markedly when they cannot reweight or linearly remix their features (Fig.~\ref{fig:accuracy_metro}A).

To characterize the trade-off between predictive accuracy and model identifiability, we repeated the model recovery experiments while varying the level of evaluation flexibility: from zero-shot evaluation, through diagonal and thin rectangular $\mathbf{W}$ matrices, to unrestricted linear probing. We matched flexibility constraints across the data-generation and evaluation stages: for example, if the data-generating model was fitted using a diagonal $\mathbf{W}$, candidate models (including the generating model) were also evaluated using diagonal matrices. This was done across multiple simulated dataset sizes. Similarly, we re-estimated empirical predictive accuracy using randomly subsampled subsets of THINGS odd-one-out.

As shown in Figure~\ref{fig:accuracyVSrecovery}, there is a clear trade-off between predictive accuracy and model recovery accuracy. As evaluation flexibility increases, we gain predictive accuracy at the expense of discriminability.

\section{Discussion}
Our results show that, even with millions of trials, linear probing can fail to identify the model that generated the data. For our set of candidate models, model recovery accuracy plateaus below 80\%. Holding the number of features constant across models---and thus the parameter count of the linear transformation matrix---does not mitigate the problem. Furthermore, in typical small-scale experiments (e.g., 100,000 trials), model recovery accuracy can be far worse---for our model set, it remains below 50\%. These findings call into question the use of predictive accuracy under linear probing as an alignment metric for comparing models of biological representation.

\paragraph{Limitations.} \label{par:limitations}
The scope of our simulations is limited to behavioral data, and specifically, to the THINGS odd-one-out task. We chose this task as a test case because it is supported by a large empirical dataset \cite{hebart_things-data_2023} and allows straightforward simulation of synthetic responses by sampling from model-specified multinomial distributions. The noise-calibrated simulation approach can be readily extended to other behavioral paradigms, such as classification \cite{battleday_capturing_2020} or multi-arrangement (Kriegeskorte \& Mur, 2012). Model identification using neural data operates in a markedly different regime: responses are multivariate and continuous, rather than univariate and discrete as in the behavioral case. Therefore, while our results demonstrate a pronounced predictivity–identifiability trade-off when comparing models to behavior in a large dataset, the severity of this trade-off for neural data cannot be inferred from our findings. Recent reports of qualitatively distinct neural network models achieving indistinguishable performance under flexible comparisons to neural data \cite{conwell_large-scale_2024,storrs_diverse_2021} make this question especially pertinent. Addressing it will require future work using noise-calibrated, modality-specific neural simulations.
%
%
It is important to note that the quantitative model recovery accuracy levels reported are specific to the candidate model set used. Still, we expect the qualitative finding of prevalent model misidentification to generalize to other model sets of similar size and to become more pronounced with larger sets.

A more fundamental limitation is that, in real-world comparisons between model predictions and empirical responses, the true model---the biological representations---is absent from the candidate set. Thus, model recovery within a closed set is a necessary but insufficient criterion for reliable model-comparison experiments.
\vspace{-0.23cm}
\paragraph{Navigating the accuracy--identifiability trade-off.}
The empirical findings highlight a tension between predictive performance and model identifiability: increasing the flexibility of the alignment metric improves predictive accuracy, but it also reduces the ability to discriminate among competing models. Experimental and analytical decisions guided solely by the goal of maximizing predictive accuracy risk overlooking this trade-off, thereby landing at its far end, where predictive performance is high but mechanistic correspondence to the modeled system is limited. Therefore, the pursuit of predictive performance must be tempered by attention to the specificity of the predictions: for example, through noise-calibrated model recovery simulations, as explored here.

Progress beyond the limitations of the accuracy--identifiability trade-off may require rethinking evaluation practices along three key directions.


\emph{1. Change the stimuli:} 
As in many model comparison studies, we evaluated models out-of-sample--- that is, on new stimuli drawn from the training distribution. Out-of-distribution generalization, which more strongly probes the models' inductive biases, may offer greater model comparative power \cite{geirhos_generalisation_2018,geirhos_partial_2021}. Stimuli designed to elicit model disagreement may yield even greater gains ~\cite{wang_maximum_2008,golan_controversial_2020,golan_distinguishing_2022,zhou_comparing_2024,lipshutz_comparing_2024}.

The recovery gains we obtained from larger and more diagnostic triplet sets suggest that smarter sampling matters at least as much as sheer volume (Fig.~\ref{fig:model_recovery_full_W}). Adaptive, model-driven stimulus selection---constructing trials that maximize the expected divergence in network responses---can sharpen our ability to treat predictive accuracy as an indicator for human-model alignment, enhancing current flexible alignment methods without compromising identifiability~\cite{golan_distinguishing_2022,geirhos_partial_2021,zhou_comparing_2024,golan_controversial_2020,wang_maximum_2008}.

\emph{2. Change the metrics:} Constraining data-driven model alignment by biologically motivated and/or inter-individual variability--informed priors \cite{lin_topology_2024,feather2025_turing_test,khosla_soft_2024,williams_equivalence_2024,thobani2025modelbrain} may improve upon the overly flexible family of linear transformations. Furthermore, imposing greater constraints on the readout may enhance its interpretability. For example, constraining the learned stimulus embeddings to be non-negative prevents features from canceling each other (e.g., \cite{mahner_dimensions_2025}). Finally, Bayesian readout models, which estimate a distribution of feature weights rather than a point estimate, may improve robustness to sampling noise.


\emph{3. Change the models:} The considerable geometric shifts required to align the networks suggest that linear probes can obscure important representational mismatches. Embedding richer priors directly into the models---through task design, objective functions, or biologically inspired architectures~\cite{kubilius2019_cornet,kar_evidence_2019,yamins_using_2016,choksi_predify_2021,peterson_adapting_2017,dapello_simulating_2020}---could allow aligned representations to emerge natively, reducing the need for substantial post hoc transformations. More broadly, further progress in neural network-based modeling of brain and behavior may depend less on ever-larger data-driven fits of pre-trained models and more on deliberate model refinement to embody explicit computational hypotheses.

\newcommand\shorturl[1]{%
  \href{https://#1}{\nolinkurl{#1}}%
}

\blfootnote{\textbf{Code and data} are available on \shorturl{github.com/brainsandmachines/oddoneout_model_recovery}}

\newpage

\section*{Acknowledgments}
This work was supported by the Israel Science Foundation (grant number 534/24 to T.G.).

\bibliographystyle{unsrtnat} 
\bibliography{zotero_01, zotero_02}



\FloatBarrier 

\newpage
\appendix

\setcounter{figure}{0}
\renewcommand{\thefigure}{S\arabic{figure}}
\setcounter{table}{0}
\renewcommand{\thetable}{S\arabic{table}}
\section{Appendices}
\subsection{Scalar-matrix shrinkage regularizer}
\label{appendix:regularization}
\begin{lemma}\label{lem:trace_reg}
Let $\mathbf{W}\in\mathbb{R}^{p\times p}$. Define
\[
   \mathcal{R}(\mathbf{W}) \;=\;
   \min_{\gamma\in\mathbb{R}}
   \bigl\lVert \mathbf{W}-\gamma\mathbf{I} \bigr\rVert_{F}^{2}.
\]
Then
\[
   \mathcal{R}(\mathbf{W})
   \;=\;
   \lVert \mathbf{W} \rVert_{F}^{2}
   \;-\;
   \frac{\operatorname{tr}(\mathbf{W})^{2}}{p}.
\]
\end{lemma}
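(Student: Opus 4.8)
The plan is to treat the inner minimization as a one-dimensional least-squares problem in the scalar $\gamma$ and solve it in closed form. First I would expand the squared Frobenius norm using $\lVert A\rVert_F^2 = \operatorname{tr}(A^\top A)$, writing
\[
   \bigl\lVert \mathbf{W}-\gamma\mathbf{I}\bigr\rVert_F^2
   = \operatorname{tr}\!\bigl((\mathbf{W}-\gamma\mathbf{I})^\top(\mathbf{W}-\gamma\mathbf{I})\bigr)
   = \lVert\mathbf{W}\rVert_F^2 - 2\gamma\operatorname{tr}(\mathbf{W}) + \gamma^2 p,
\]
where I use $\operatorname{tr}(\mathbf{W}^\top\mathbf{I}) = \operatorname{tr}(\mathbf{W}) = \operatorname{tr}(\mathbf{I}^\top\mathbf{W})$ and $\operatorname{tr}(\mathbf{I}^\top\mathbf{I}) = p$.

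Next I would observe that the right-hand side is a quadratic in $\gamma$ with strictly positive leading coefficient $p$ (note $p\ge 1$), hence strictly convex with a unique minimizer. Setting the derivative with respect to $\gamma$ to zero gives $-2\operatorname{tr}(\mathbf{W}) + 2p\gamma = 0$, i.e. $\gamma^\star = \operatorname{tr}(\mathbf{W})/p$. Substituting $\gamma^\star$ back into the expanded expression yields
\[
   \lVert\mathbf{W}\rVert_F^2 - 2\frac{\operatorname{tr}(\mathbf{W})^2}{p} + \frac{\operatorname{tr}(\mathbf{W})^2}{p^2}\,p
   = \lVert\mathbf{W}\rVert_F^2 - \frac{\operatorname{tr}(\mathbf{W})^2}{p},
\]
which is exactly the claimed identity. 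Alternatively, one can avoid calculus entirely by completing the square: $\lVert\mathbf{W}\rVert_F^2 - 2\gamma\operatorname{tr}(\mathbf{W}) + \gamma^2 p = p\bigl(\gamma - \operatorname{tr}(\mathbf{W})/p\bigr)^2 + \lVert\mathbf{W}\rVert_F^2 - \operatorname{tr}(\mathbf{W})^2/p$, from which the minimum over $\gamma$ is read off directly as the constant term.

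There is no real obstacle here: the argument is a routine finite-dimensional least-squares computation, and the only points that warrant a line of justification are the cyclicity/linearity of the trace used in the expansion and the strict convexity that guarantees the critical point is the global minimum. If one wanted a more conceptual phrasing, the result is just the Pythagorean decomposition of $\mathbf{W}$ into its projection onto $\operatorname{span}\{\mathbf{I}\}$ (which has squared norm $\operatorname{tr}(\mathbf{W})^2/p$, since $\mathbf{I}/\sqrt{p}$ is a unit vector in the Frobenius inner product) and the orthogonal complement, but the direct computation above is self-contained and shortest.
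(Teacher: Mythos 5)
Your proposal is correct and follows essentially the same route as the paper's proof: expand the squared Frobenius norm via the trace, obtain a strictly convex quadratic in $\gamma$ with minimizer $\gamma^{\star}=\operatorname{tr}(\mathbf{W})/p$, and substitute back. The completing-the-square and orthogonal-projection remarks are nice optional framings but do not change the argument.
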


\begin{proof}
Using $\lVert\mathbf{A}\rVert_{F}^{2}=\operatorname{tr}(\mathbf{A}^{\mathsf T}\mathbf{A})$,
\[
   \bigl\lVert \mathbf{W}-\gamma\mathbf{I} \bigr\rVert_{F}^{2}
   \;=\;
   \operatorname{tr}\!\bigl[(\mathbf{W}-\gamma\mathbf{I})^{\mathsf T}(\mathbf{W}-\gamma\mathbf{I})\bigr]
   \;=\;
   \lVert \mathbf{W} \rVert_{F}^{2}
   - 2\gamma\,\operatorname{tr}(\mathbf{W})
   + \gamma^{2}p
   \;=\; f(\gamma).
\]
Because
\[
   f(\gamma)
   \;=\;
   p\gamma^{2}
   - 2\,\operatorname{tr}(\mathbf{W})\,\gamma
   + \lVert \mathbf{W} \rVert_{F}^{2},
   \qquad
   f'(\gamma)
   \;=\;
   2p\gamma
   - 2\,\operatorname{tr}(\mathbf{W}),
   \qquad
   f''(\gamma)=2p>0,
\]
the unique minimizer is
\[
   \gamma^{\star}
   \;=\;
   \frac{\operatorname{tr}(\mathbf{W})}{p}.
\]
Substituting $\gamma^{\star}$ into $f(\gamma)$ yields
\[
   \min_{\gamma}f(\gamma)
   \;=\;
   \lVert \mathbf{W} \rVert_{F}^{2}
   - \frac{\operatorname{tr}(\mathbf{W})^{2}}{p},
\]
\end{proof}

\paragraph{Notation.}
Throughout, let $p\in\mathbb{N}$ be the dimension of the square matrix
$\mathbf{W}\in\mathbb{R}^{p\times p}$; that is, $\mathbf{W}$ has
$p$ rows and $p$ columns.
Because $p$ counts rows/columns it satisfies $p\ge 1$, hence $p>0$.
This fact ensures the quadratic
$f(\gamma)=p\gamma^{2}-2\,\operatorname{tr}(\mathbf{W})\gamma+\|\mathbf{W}\|_{F}^{2}$
is \emph{strictly} convex: its second derivative is
$f''(\gamma)=2p>0$, guaranteeing a unique minimizer~$\gamma^\star$ in
Lemma~\ref{lem:trace_reg}.

\subsection{Calibration to human noise ceiling}
To ensure that the simulated responses were realistically distributed, we optimized each data-generating model's softmax temperature so that its simulated noise ceiling matches the empirical noise ceiling, estimated from a subset of the THINGS-odd-one-out dataset that includes responses to 1,000 triplets, each presented to approximately 30 participants.

\subsubsection{Noise ceiling estimation}
\label{noiseceiling}
Let $c^{(t)}_{a}$ denote the number of participants who chose the stimulus in position $a\in\{1,2,3\}$ as the odd‑one‑out for triplet $t\in\{1,\dots,N\}$.
$$
\mathbf c^{(t)}=(c^{(t)}_{1},c^{(t)}_{2},c^{(t)}_{3}), 
 \qquad 
 T^{(t)}=\sum_{i=1}^{3} c^{(t)}_{i}.
 $$

 For each $i\in\{1,2,3\}$, define

 $$
 \mathbf c^{(t)}_{-i}= \mathbf c^{(t)}-\mathbf{e}_i,
 \quad
 V^{(t)}_{i}= \bigl\{\,j\in\{1,2,3\}\mid c^{(t)}_{-i,j}= \max_{k} c^{(t)}_{-i,k}\bigr\},
 $$

 where $\mathbf{e}_i$ is the $i$-th standard basis vector in $\mathbb R^{3}$.
 The leave-one-subject-out (LOO) accuracy for triplet $t$ is

 $$
 a^{(t)}=\frac{1}{T^{(t)}}\sum_{i=1}^{3}
 c^{(t)}_{i}\;
 \frac{\mathbf 1\!\bigl[i\in V^{(t)}_{i}\bigr]}{|V^{(t)}_{i}|},
 $$

 and the overall LOO noise ceiling is

 $$
 NC_{\text{LOO}}=\frac{1}{N}\sum_{t=1}^{N} a^{(t)}.
 $$

\label{appendix:calibration}
\subsubsection{Temperature calibration}
We calibrated the softmax temperature \(T\) of each data-generating model so that the model's predicted noise ceiling matches the one estimated from human data.

\paragraph{Empirical noise ceiling.} Let $
\eta_{\mathrm{ceil}}
= 0.678
$
denote the noise ceiling estimated from the THINGS odd-one-out triplet judgments using a specific repeated set across participants \(D_{\mathrm{cal}}\). This triplet set was used exclusively during the calibration phase.

\paragraph{Model-estimated noise ceiling.}
Let \(D_{\mathrm{cal}}\) be our calibration set of triplets drawn from the same pool. For each triplet \(\{a,b,c\}_i\in D_{\mathrm{cal}}\), the model assigns the following probability:
\begin{equation} 
p\bigl(\mathrm{odd\text{-}one\text{-}out}=x \mid \mathrm{triplet}_i\bigr)
= \frac{
  \exp\!\bigl(S_{y,z}/T\bigr)
}{
  \exp(S_{a,b}/T) + \exp(S_{a,c}/T) + \exp(S_{b,c}/T)
},
\end{equation}

where \(\{y,z\} = \{a,b,c\}\setminus\{x\}\). We then define the model's noise ceiling as the average, over all calibration triplets, of the model's maximum (top-choice) probability:

\begin{align}
\label{eq:hat_eta}
\hat{\eta}_{\mathrm{ceil}}(T)
&= \frac{1}{\lvert D_{\mathrm{cal}}\rvert}
  \sum_{i=1}^{\lvert D_{\mathrm{cal}}\rvert}
  \max_{x\in\{a,b,c\}}
  p\bigl(\mathrm{odd\text{-}one\text{-}out}=x \mid \mathrm{triplet}_i\bigr).
\end{align}

\paragraph{Optimal temperature.}
We choose \(T\) to minimize the squared deviation between the empirical and model-estimated noise ceilings:
\begin{equation}
\label{eq:opt_T}
T^*
= \arg\min_{T}
  \bigl[\,
    \eta_{\mathrm{ceil}}
    - \hat{\eta}_{\mathrm{ceil}}(T)
  \bigr]^2.
\end{equation}

\noindent This procedure guarantees that, on average, a noise ceiling estimated from simulated responses to the calibration triplet set would match the empirical human noise ceiling.

\subsection{Sampling random triplets}
\label{appendix:sampling_triplets}
To test model recovery under general conditions, the simulation study used random triplets \emph{not} included in the THINGS odd-one-out dataset. In each simulation, we randomly partitioned the 1,854 images into three equally-sized disjoint subsets. One subset served as the test-image pool, and the remaining two subsets were concatenated and then randomly split into training (80\%) and validation (20\%) image pools. Using these pools, we randomly sampled 50 to 5.25 million triplets (spanning a logarithmically spaced range of stimulus set sizes), such that 80\%, 10\%, and 10\% of the triplets were drawn from the training, validation, and test pools, respectively. No triplet included images from more than one split or overlapped with THINGS odd-one-out. Candidate model predictive accuracy (see Eq.~\ref{eq:pred_acc_def}) was evaluated and averaged across the three cross-validation folds, each using a different test-image pool assignment. Thus, in each simulation, each image appeared in the test pool in exactly one of the three cross-validation folds.

\subsection{Model-recovery formulation}
\label{appendix:notations}
This section describes the model recovery simulations in detail.

\paragraph{Notation:}
\begin{itemize}

    \item Let \( \mathbf{W} \in \mathbb{R}^{p \times p} \) denote a linear transformation matrix that maps neural network representations into a target representational space (either behavioral or model-generated). This transformation is used to align models to human judgments or to other models' simulated responses.
    \item Let \( M_i \) denote the $i$-th neural network model.
    \item Let  \(\mathcal{M} = \{M_1,\dots,M_N\}\) be the set of \(N\) pretrained encoders.
    \item Let $\tau$ denote the behavioral dataset consisting of human odd-one-out judgments.
    \item Let $\mathbf W_{M_i\!\rightarrow\!\star}\in\mathbb R^{p\times p}$ denote the linear transformation $\mathbf{W}$ which maps the $p$-dimensional features of $M_i$ into a target response space.
\end{itemize}
Throughout, the notation \( \mathbf{W}_{\text{source} \rightarrow \text{target}} \) emphasizes that features from the model on the left are being aligned to judgments (or predictions) associated with the target on the right.
 
 \paragraph{Model-to-Behavior Alignment.} $\mathbf{W}_{M_i \to \tau}$ is the transformation matrix learned to map representations from model \( M_i \) to best predict responses from the THINGS odd-one-out dataset. We use the shorthands $M_{i \to\tau}$ and $ \mathcal{G}$:
 
 \begin{equation*}     
        M_{i\rightarrow\tau}\;:=\;f\!\bigl(M_i,\mathbf W_{M_i\rightarrow\tau}\bigr),\qquad
          \mathcal G:=\{M_{1\rightarrow\tau},\dots,M_{N\rightarrow\tau}\}.
 \end{equation*}

\paragraph{Model-to-Model Alignment (Simulated Data).} Given a generator $M_{i\rightarrow\tau}\in\mathcal G$ and a candidate model $M_j\in\mathcal M$, we fit
\begin{equation*}
          \mathbf W_{M_j\rightarrow M_{i\to\tau}}^{(d)}\in\mathbb R^{p\times p},
          \qquad
          M_{j\rightarrow i}^{(d)}:=f\!\bigl(M_j,\mathbf W_{M_j\rightarrow M_{i\to\tau}}^{(d)}\bigr),
\end{equation*} 
separately for every simulated dataset $d\in\{1,\dots,D\}$ so that
        $M_{j\rightarrow i}^{(d)}$ best predicts the synthetic responses of $M_{i\rightarrow\tau}$.

\subsection*{Model Recovery Experiments}

\paragraph{Prediction vectors.}  Let $d^{(k)}\in d$ be the $k$-th triplet in $d$. Let ${y}_{i\to \tau}(d^{(k)})\in\{1,2,3\}$ be the true label of the $d^{(k)}$ triplet from generator $M_{i\to\tau}$. Similarly, let $\hat{y}_{j\to i}(d^{(k)})$ be the predicted labels of $M_{j\to i}^{(d)}$. For a full dataset $d$:%
\begin{align*}
  & \hat{\mathbf y}_{j\to i}(d)
  \;=\; \{\hat{y}_{j\to i}(d^{(1)}),\dots,\hat{y}_{j\to i}(d^{(K)})\}
  ,\\
  & {\mathbf y}_{i\to \tau}(d)
  \;=\; \{{y}_{i\to \tau}(d^{(1)})\,\dots,{y}_{i\to \tau}(d^{(K)})\}, 
\end{align*}
where $K =|d|$.

\paragraph{Candidate model predictive accuracy}
For each \((M_i,M_j)\), let $\{d_{\text{train}},d_{\text{test}} \}$ be a partition of dataset $d$. Candidate model predictive accuracy is defined by:
\begin{equation}
  \operatorname{Acc}(M_{j\to i}^{(d_{\text{train}})}|M_{i \to \tau})
  \;=\;
  \frac{1}{|d_{\text{test}}|}
  \sum_{k=1}^{|d_{\text{test}}|}
  \iverson{\hat{y}_{j\to i}(d_{\text{test}}^{(k)})={y}_{i\to \tau}(d_{\text{test}}^{(k)})}.
  \label{eq:pred_acc_def}
\end{equation}
where $\iverson{\cdot}$ equals $1$ if the condition is true and $0$ otherwise.


\begin{description}
    \item[Model-recovery accuracy] The model recovery accuracy can be defined as:
\begin{equation}    
\label{eq:recovery}
  \frac{1}{|\mathcal{M}|}
  \sum_{i=1}^{|\mathcal{M}|} \frac{1}{D}
  \sum_{d=1}^{D} \iverson{\operatorname{Acc}(M_{i\to i}^{(d_{\text{train}})} \mid M_{i \to \tau}) = \underset{\mathclap{j\in\{1,\dots,|\mathcal{M}|\}}}{\max}\operatorname{Acc}(M_{j\to i}^{(d_{\text{train}})} \mid M_{i \to \tau})}
\end{equation}
\end{description}  

This is the fraction of all simulations in which the candidate model with the highest predictive accuracy matches the true generator.

\subsection{Estimation of effective dimensionality}
\label{appendix:effective_dimensionality}
For each model, we compute the $D \times D$ covariance matrix of the activations in its deepest representational layer (where $D$ is the number of units in that layer, details in Appendix~\ref{app:Implementation}), across the 1,854 images from the THINGS dataset, obtain the eigenvalues ($\lambda_i$) of this matrix using Principal Component Analysis (PCA), and then calculate: 
\begin{equation}
    ED = \frac{(\sum_{i=1}^D{\lambda_i})^2}{\sum_{i=1}^D{\lambda_i}^2}
\end{equation}
This quantity (labeled $n_2$ in \cite{del_giudice_effective_2021}) estimates the participation ratio---the effective number of principal components contributing to the total variance \cite{del_giudice_effective_2021,elmoznino_high-performing_2024}.
We repeated this procedure for model activations obtained after applying the human-aligned linear transformation (Section~\ref{model-human-alignment}).

\subsection{Estimation of Intrinsic Dimensionality}
\label{appendix:intrinsic_dimensionality_gride}
We estimated intrinsic dimensionality (ID) with GRIDE (Generalized Ratios Intrinsic Dimension Estimator) \cite{denti_generalized_2022}, using its implementation in the \texttt{dadapy} Python package \cite{glielmo_dadapy_2022} with default settings. For each model, we extracted activations for the 1,854 THINGS images from its deepest representational layer (details in Appendix~\ref{app:Implementation}), yielding a $1{,}854\times D$ activation matrix (where $D$ is the number of units in that layer). We next applied the learned linear human-alignment transform $\mathbf{W}$---fit to predict the THINGS odd-one-out responses (Section~\ref{model-human-alignment})---and computed ID on both the original and transformed activations. Given $n$ data points $\{x_i\}_{i=1}^{n}\subset\mathbb{R}^D$, we compute Euclidean nearest-neighbor distances and, at multiple scales, form the ordered–neighbor distance ratios
$\dot{\mu}_i \equiv \mu_{i,n_1,n_2}=r^{(i)}_{n_2}/r^{(i)}_{n_1}$ with $n_2>n_1\ge1$ (the chosen neighbor orders that set the scale). GRIDE then estimates the intrinsic dimension $d$ by maximum likelihood under a local homogeneity (Poisson) assumption, maximizing the log-likelihood
\begin{equation}
\hat{d}
= \arg\max \big[n\log(d)
\;+\; (n_2 - n_1 - 1)\sum_{i}\log\!\big(\dot{\mu}_i^{\,d}-1\big)
\;-\; \log\!\big(B(n_2-n_1,\,n_1)\big)
\;-\; \big((n_2-1)d+1\big)\sum_{i}\log(\dot{\mu}_i)\big]\,.
\end{equation}

 where $B(\cdot,\cdot)$ denotes the Beta function. 

The final ID estimate, $\hat{d}$, is determined by identifying a stable "plateau" in the ID values across the different scales. This stable value represents the representation's intrinsic dimensionality, distinguishing it from noise that typically appears at very small scales.

\subsection{Resources}
\label{appendix:resources}
\paragraph{Simulations.}~We conducted large-scale simulations in which each of 20 models acted both as a data generator and as a candidate model. 
For each data-generating model, we sampled 30 random datasets per experimental condition (triplet set size).

Our main analysis covered 18 triplet-set sizes, yielding \(18 \times 600 = 10{,}800\) simulations for the results shown in Fig.~\ref{fig:model_recovery_full_W}. For every run we selected the optimal regularization coefficient from ten values (logarithmically spaced from \(10^{-6}\) to \(10^{5}\)) via cross-validation.

Simulations were conducted on the BGU ISE-CS-DT cluster, mainly using a server with eight NVIDIA RTX 6000 Ada GPUs. All preprocessing and analysis were performed on a local workstation. Simulation runtime depended on transformation flexibility, dataset size, model count, and random seed initialization. The mean runtime was approximately 9.5 minutes per simulation, so reproducing the 10,800 simulations of Fig.~\ref{fig:model_recovery_full_W} on eight RTX 6000 GPUs would require roughly nine to ten days.

\subsection{Implementation details}
\label{app:Implementation}
\paragraph{Feature extraction.} Feature vectors were extracted using an in-house Python package that wraps \texttt{torchvision}, Hugging Face, and TorchHub~\cite{paszke_pytorch_2019}. Meta AI models were obtained from the SLIP repository \cite{mu_slip_2022}. 

For all models, the deepest representational layer was extracted. Specifically, if the ultimate layer encoded predictions (e.g., logits in classifiers), we extracted the penultimate layer activations. If the ultimate layer encoded embeddings (e.g., as in self-supervised or image-text-aligned models), we extracted the ultimate layer activations.

\paragraph{RSA and MDS computations.}~Representational similarity analyses (RSA) \cite{kriegeskorte_representational_2008} and multidimensional scaling (MDS) were performed with the RSA Toolbox for Python \cite{van_den_bosch_python_2025}.

\FloatBarrier
\clearpage
\section{Supplemental Figures}
\begin{figure}[H]
    \centering
    \includegraphics[width=\linewidth]{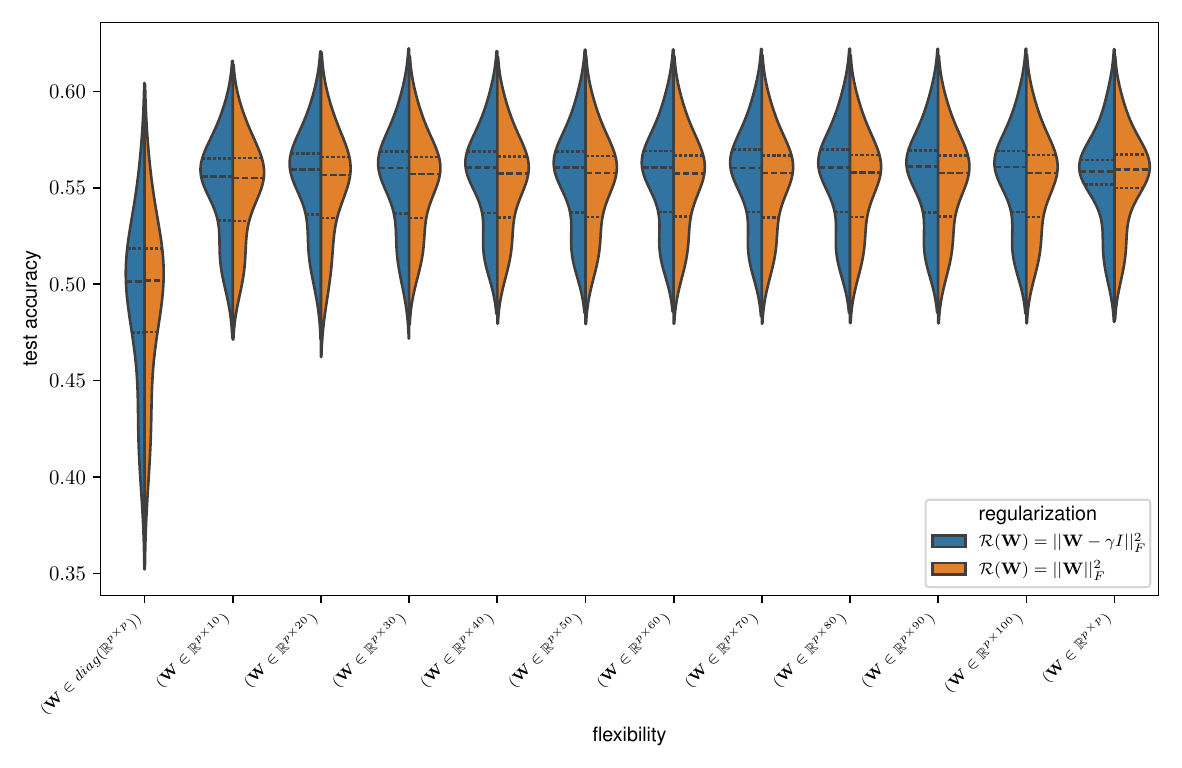}
    \caption{To verify that scalar-matrix shrinkage regularization (Eq.~\ref{eq:scalar_regularization}) does not impair model predictive accuracy, we evaluated all models on the THINGS odd-one-out dataset and compared the results obtained using Frobenius norm-based regularization to those obtained using scalar-matrix shrinkage regularization. Each violin plot depicts the distribution of odd-one-out predictive accuracy across models under the two regularization methods. The x-axis indicates levels of transformation matrix flexibility, from diagonal (left) to unconstrained (right). As evident from the overlapping distributions, we observed no meaningful differences in predictive accuracy between the two methods at any regularization level. A two one-sided tests (TOST) procedure with $\Delta=0.05$ was used to assess the equivalence of the regularization methods' means; a Bonferroni correction for multiple comparisons was applied, indicating statistical equivalence across all levels of flexibility. Note, however, that this equivalence holds for optimal regularization. When the transformation is over-regularized, the scalar-matrix shrinkage regularization pulls the predictions toward the zero-shot solution, whereas the Frobenius norm-based regularization shrinks the transformation matrix toward the zero matrix, pulling the predictions toward a uniform distribution and thus impairing performance.}
    \label{fig:suppfig:Reg_violin}
\end{figure}

\begin{figure}
    \centering
    \includegraphics[]{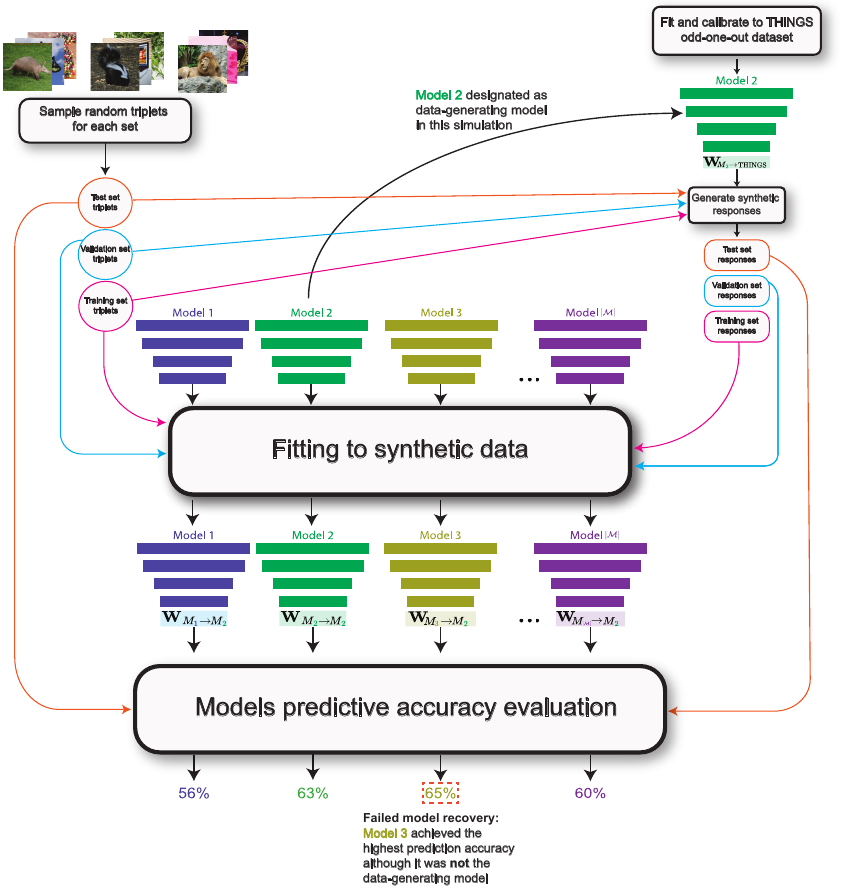}
    \caption{An illustration of one model recovery simulation. The designated data-generating model is fitted and calibrated to the THINGS odd-one-out dataset. Then, using three disjoint sets of images, we generate training, validation, and test sets of randomly sampled triplets with corresponding simulated responses. All candidate models, including the one that originally generated the data, are fitted with a model-specific linear transformation matrix using a regularization hyperparameter selected based on the validation set. After fitting, each model predicts the responses to the test set triplets, and its predictive accuracy is evaluated. This illustration demonstrates a case of misidentification, where the model that most accurately predicts the synthetic data is not the model that originally generated it. \\
    The natural images used in this illustration were taken from a CC0-licensed set of 1,854 images corresponding to the same concepts as in THINGS \cite{hebart_things-data_2023}, included in THINGS+ \cite{stoinski_thingsplus_2023}. 
    }
    \label{fig:illustrations}
\end{figure}
\begin{figure}[H]
    \centering
    \includegraphics[width=\linewidth]{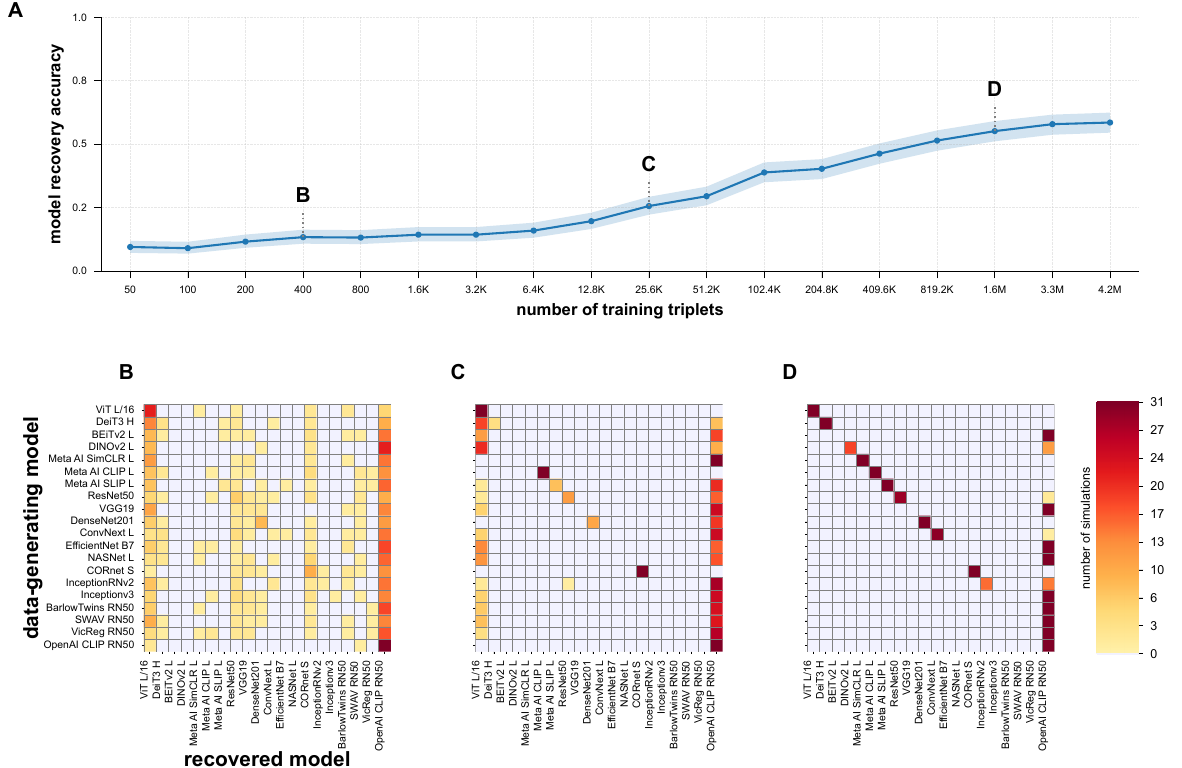}
    \caption{Same model recovery analysis as in Figure~\ref{fig:model_recovery_full_W}, while using the top 500 principal components (PCs) of each model's representation instead of its original features. For each model, we conducted principal component analysis on its final representational layer activation patterns in response to the ImageNet-1K validation set images and retained only the scores of the top 500 PCs.
    This dimensionality reduction was applied to the models' representations both when the models generated the data and when served as candidate models. Consequently, all fits used $500 \times 500$ transformation matrices.\\(\textbf{A}) Model recovery accuracy under linear probing with a $500 \times 500$ matrix across different dataset sizes. \textbf{(B, C, D)} Confusion matrices at three training set sizes (400, 25.6K, and 1.6M triplets). Each matrix row corresponds to a data-generating model, and each matrix column to a recovered model. Diagonal entries represent correct model recovery. \textbf{Between-model differences in transformation dimensionality do not explain model misidentification.}}
    \label{fig:model_recovery_PCA_500}
\end{figure}

\begin{figure}[H]
    \centering
    \includegraphics[width=\linewidth]{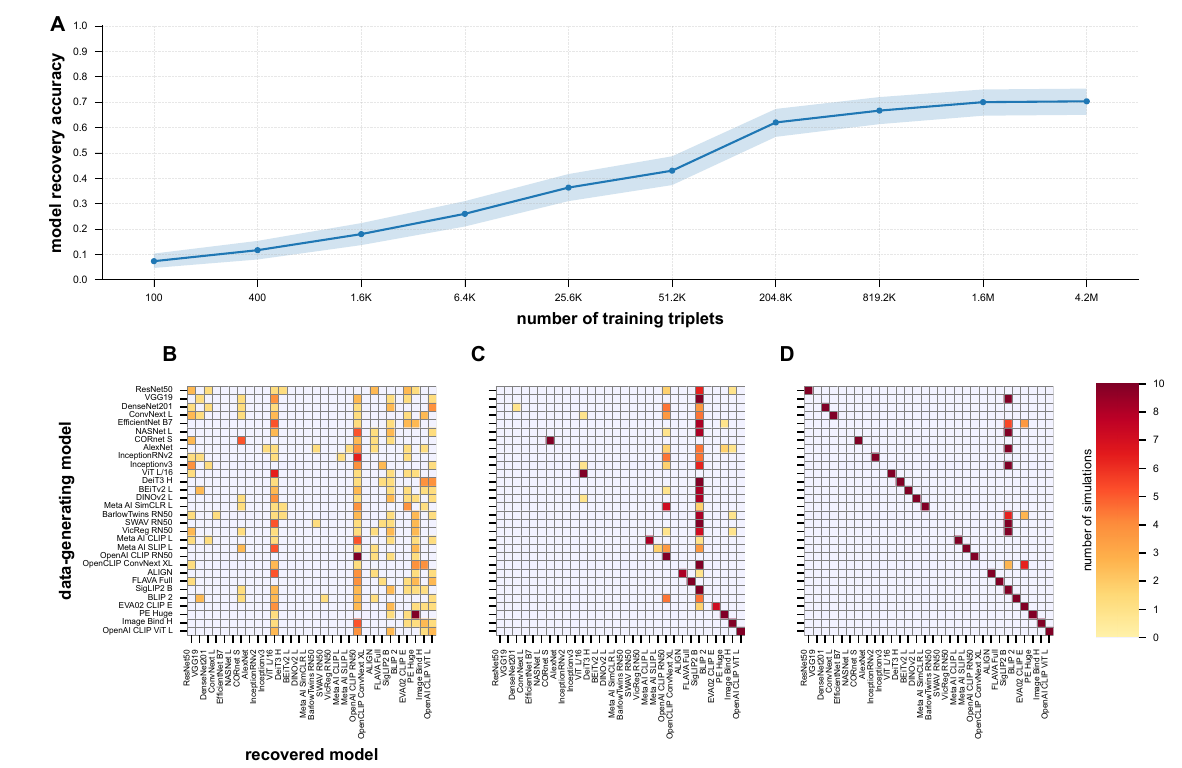} \caption{\textbf{Model recovery simulations with an extended model set.} We reran the model recovery simulations with 10 additional models (see Table~\ref{tab:new_models_accuracy}), most of which were image--text aligned. To limit the computational cost, we used 10 simulations per experimental condition compared to 30 used in the main analysis (Fig.~\ref{fig:model_recovery_full_W}), and 10 different training dataset sizes compared to 18. Other than these changes, the analysis is the same as in Figure~\ref{fig:model_recovery_full_W}.  (\textbf{A}) Model recovery accuracy under linear probing across different training set sizes. (\textbf{B--D}) Confusion matrices at three training set sizes (400, 25{,}600, and 1.6 million). As expected, adding more models reduced model recovery accuracy: it plateaued around 70\%, compared to 80\% in the original 20-model experiment (Fig.~\ref{fig:model_recovery_full_W}).}
    \label{sup:30_recovery}
\end{figure}

\begin{figure}[H]
    \centering
    \includegraphics[width=\linewidth]{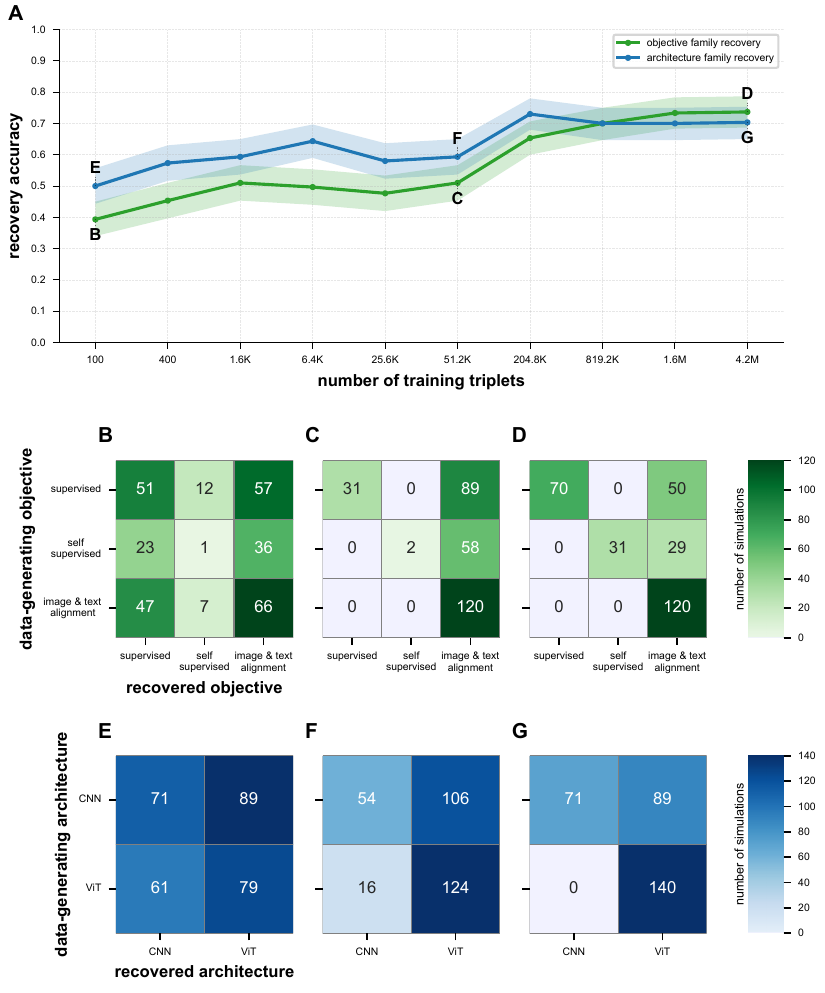}
    \caption{\textbf{Model recovery accuracy analysis, grouped by objective or architecture type.} Using the extended-model set simulations described in Figure~\ref{sup:30_recovery}, we inspected how well coarser model characteristics---objective type or architecture type---can be recovered. For objective type, we categorized each model's objective as supervised, self-supervised, or image--text alignment. For architecture type, we categorized each model's architecture as convolutional or vision transformer. We then evaluated model recovery accuracy at the category level, defining correct recovery as cases where the best-performing candidate model belonged to the same group as the data-generating model. Panel \textbf{A} shows model recovery accuracy across different training set sizes. The \emph{green} line indicates recovery accuracy between objective types, and the \emph{blue} line indicates recovery accuracy between architecture types. The grouped recovery accuracy for only 100 training triplets (confusion matrices shown in \textbf{B} and \textbf{E}) exceeded that of the non-grouped analysis (Fig.~\ref{sup:30_recovery}), owing to the higher chance level associated with those conditions. Even with 4.2 million training triplets (confusion matrices shown in \textbf{D} and \textbf{G}), the accuracy reached only about 72\% for objective and 70\% for architecture type. This result indicates that linear probing may obscure not only individual model identity but also broader representational motifs.}
    \label{sup:grouped_recovery}
\end{figure}

\begin{figure}[H]
  \centering
  \includegraphics[]{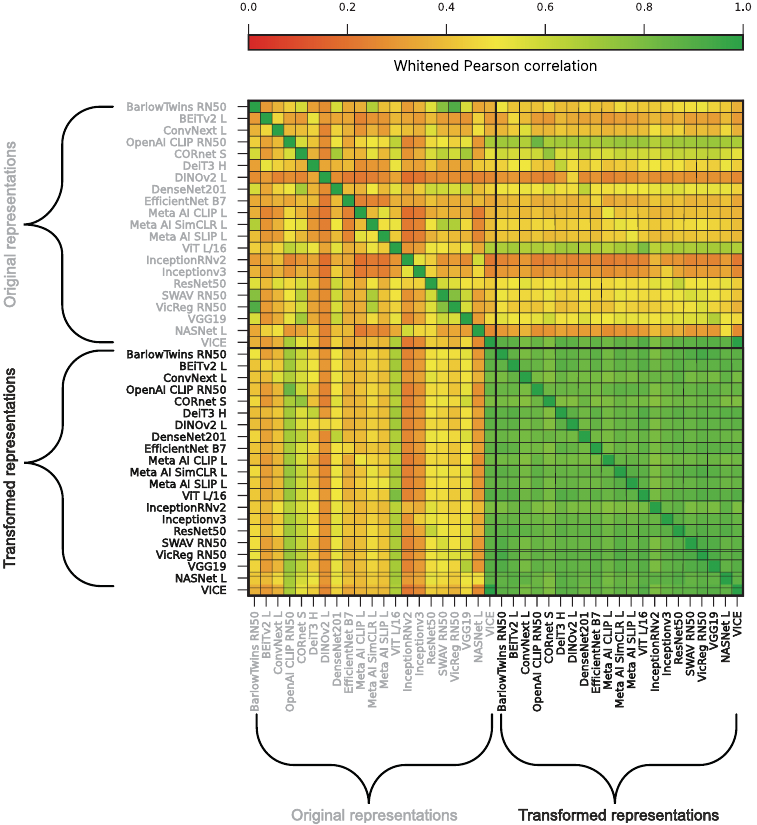}
\caption{\textbf{Similarity of model representations before and after alignment to THINGS odd-one-out.}
For each network, we computed a squared Euclidean representational dissimilarity matrix (RDM; $1{,}854\times1{,}854$) on the \textsc{THINGS} images and vectorized its upper-triangular entries. Each cell above shows the whitened Pearson correlation between two such RDM vectors---that is, the ordinary linear correlation after whitening by the sample covariance of RDM entries, which discounts shared-variance artifacts \cite{diedrichsen2021_corr_cov,van_den_bosch_python_2025}. Rows and columns are arranged in pairs: the first entry represents the model's \emph{original} features, and the second entry its \emph{transformed} features obtained from the alignment to THINGS odd-one-out, estimated as part of our recovery experiments. The embeddings of VICE \cite{muttenthaler_vice_2022} were used without any further fitting, since this model was trained to fit human odd-one-out responses. \textbf{Between-model dissimilarities:} The upper-left quadrant shows that unaligned models occupy distinct representational geometries; the off-diagonal correlations span a broad range (typically 0.2--0.5), reflecting diverse unaligned model representations. The lower-right quadrant displays a bright band of high off-diagonal similarity, indicating that alignment drives disparate models toward a shared geometry, eroding their ``individuality.'' \textbf{Within-model representational shift:} The block-diagonal formed by each model's original vs. transformed RDMs reveals how far a model's geometry moves under the linear transform. 
We quantify this shift as $d_{\text{shift}} \;=\; 1 - \rho_{\text{whitened}}\!\bigl(\text{RDM}_{\text{orig}},\text{RDM}_{\text{aligned}}\bigr)$, so larger values reflect greater internal reorganization. These within-model shift scores serve as the ``alignment-induced representational shift'' predictor in the regression analysis (see Table~\ref{tab:models_properties}) and are visualized as green arrows in Figure~\ref{fig:MDS_RSA}.
}
  \label{suppfig:RSA}
\end{figure}

\FloatBarrier
 \newpage
\section{Supplemental tables}

\begin{table}[htbp]
  \centering
  \renewcommand{\arraystretch}{1.4}
  \resizebox{\textwidth}{!}{
  \begin{tabular}{@{}llcccccc@{}}
    \toprule
    \textbf{Model} & \textbf{Objective} &
    \makecell{\textbf{Zero-shot}\\$\mathbf{W}=\mathbf{I}_{p\times p}$} &
    \makecell{\textbf{Diagonal}\\$\mathbf{W}\in\mathrm{Diag}(\mathbb{R}^{p\times p})$} &
    \makecell{\textbf{Rectangular$_{30}$}\\$\mathbf{W}\in\mathbb{R}^{p\times30}$} &
    \makecell{\textbf{Full}\\$\mathbf{W}\in\mathbb{R}^{p\times p}$} &
    \textbf{Reference} \\
    \midrule
    OpenAI CLIP RN50   & Image/Text contrastive      & 0.4784 & 0.5577  & 0.5941 & 0.5959 & \cite{radford_learning_2021}   \\
    ViT L/16           & Image classification        & 0.5370 & 0.5558  & 0.5937 & 0.5927 & \cite{dosovitskiy2020_ViT}     \\
    DeiT3 H            & Image classification        & 0.4651 & 0.5201  & 0.5809 & 0.5786 & \cite{touvron_deit_2022}       \\
    Meta AI SLIP L     & Image--Text contrastive      & 0.4386 & 0.4865 & 0.5751 & 0.5761 & \cite{mu_slip_2022}            \\
    Meta AI SimCLR L   & Self‐supervised contrastive & 0.3883 & 0.5095  & 0.5690 & 0.5693 & \cite{mu_slip_2022}        \\
    DINOv2 L           & Self‐distillation           & 0.4127 & 0.5282  & 0.5721 & 0.5645 & \cite{oquab2023_dinov2}      \\
    Meta AI CLIP L     & Image--Text contrastive      & 0.4406 & 0.4882 & 0.5628 & 0.5636 & \cite{mu_slip_2022}   \\
    SwAV RN50          & Self‐supervised clustering  & 0.4118 & 0.5015  & 0.5635 & 0.5616 & \cite{caron2020_swav} \\
    BEiTv2 L           & Masked image modeling       & 0.4176 & 0.5133  & 0.5634 & 0.5605 & \cite{peng2022_beitv2}           \\
    VicReg RN50        & VIC regularization          & 0.4387 & 0.5186  & 0.5610 & 0.5584 & \cite{bardes2021_VICReg}      \\
    VGG19              & Image classification        & 0.4679 & 0.5214  & 0.5583 & 0.5567 & \cite{simonyan2014_VGG}      \\
    ResNet50           & Image classification        & 0.4568 & 0.5080  & 0.5584 & 0.5562 & \cite{he_deep_2016}            \\
    BarlowTwins RN50   & Redundancy‐reduction SSL    & 0.4360 & 0.5072  & 0.5603 & 0.5559 & \cite{zbontar2021_barlow}     \\
    CORnet-S           & Image classification        & 0.4308 & 0.4815  & 0.5530 & 0.5526 & \cite{kubilius2019_cornet}    \\
    DenseNet201        & Image classification        & 0.4323 & 0.4961  & 0.5523 & 0.5518 & \cite{huang2016_densenet}      \\
    ConvNeXt L         & Image classification        & 0.4493 & 0.4749  & 0.5367 & 0.5347 & \cite{liu2022_convnext}        \\
    EfficientNet B7    & Image classification        & 0.4088 & 0.4601  & 0.5281 & 0.5249 & \cite{tan_efficientnet_2019}   \\
    Inceptionv3        & Image classification        & 0.3817 & 0.4198  & 0.5206 & 0.5184 & \cite{szegedy2015_inception} \\
    NASNet L           & Image classification        & 0.3799 & 0.4344  & 0.5158 & 0.5110 & \cite{zoph2017_NasNEt}      \\
    InceptionRNv2      & Image classification        & 0.3689 & 0.3989  & 0.5124 & 0.5067 & \cite{szegedy2015_inception}  \\
    
    \bottomrule
  \end{tabular}
  }

  \vspace{5pt}
  \caption{Candidate models used in this study and their cross-validated prediction accuracy on the THINGS odd-one-out dataset under varying alignment flexibility: zero-shot ($\mathbf{W} = \mathbf{I}_{p \times p}$), diagonal, rank-30 rectangular, and full square transformation. Features were extracted from each model's final representational layer (dimension $p$) and evaluated via 3-fold cross-validation over disjoint image sets. To ensure each model was evaluated under favorable conditions, we used its largest publicly available variant. The model set spans self-supervised, image-classification, and image--text alignment objectives. Note that models with substantial architectural and functional differences can achieve similar predictive performance.}
  \label{tab:probing_accuracies}
\end{table}

\renewcommand{\arraystretch}{1.3}
\begin{table}[htbp]
  \centering
  \resizebox{\textwidth}{!}{
\begin{tabular}{@{}
    >{\raggedright\arraybackslash}m{2.8cm}
    >{\centering\arraybackslash}m{2.3cm}
    >{\centering\arraybackslash}m{2.3cm}
    >{\centering\arraybackslash}m{3cm}
    >{\centering\arraybackslash}m{3cm}
    >{\centering\arraybackslash}m{3.8cm}
    >{\centering\arraybackslash}m{3cm}
    >{\centering\arraybackslash}m{3cm}
@{}}
\toprule
\textbf{Model Name} 
  & \textbf{\#Parameters} 
  & \textbf{\#Features} 
  & \textbf{Original features ED} 
  & \textbf{Transformed features ED}
  & \shortstack{\textbf{Alignment-induced}\\\textbf{representational shift}}
  & \textbf{Original features ID}
  & \textbf{Transformed features ID}
  
  \\
\midrule
    BarlowTwins RN50        &  25557032 & 2048 & 240.84 & 136.77 &  0.49 & 29.7 & 36.98   \\
    BeitV2 Large          & 303405568 & 1024 & 224.07 &  57.19 &  0.53 &  23.13 &  20.51 \\
    ConvNeXt Large          & 197767336 & 1536 & 103.15 &  48.25 &  0.45 & 27.64 & 22.53   \\
    OpenAI CLIP RN50               & 102007137 & 1024 &  47.20 &  16.13 &  0.17  & 17.37  & 14.49    \\
    CORnet-S                &  53416616 &  512 &  73.06 &  23.93 &  0.26  & 20.62  & 18.25  \\
    Deit3 Huge              & 630845440 & 1280 & 191.43 &  59.81 &  0.38 & 16.13  &   16.94 \\
    DINOv2 Large       & 304368640 & 1024 & 476.25 &  74.21 &  0.54 &  26.94 &   25.35\\
    DenseNet201              &  20013928 & 1920 & 154.78 &  48.20 &  0.37 &28.006  & 24.63  \\
    EfficientNet B7         &  66347960 & 2560 & 138.40 & 218.38 &  0.55 &  23.43 &  36.44   \\
    MetaAI CLIP Large   & 367254017 &  512 &  87.31 &  20.88 &  0.49 & 27.14 &   20.1 \\
    MetaAI SimCLR Large & 325346560 & 1024 &  52.40 &  38.06 &  0.48 & 18.9  &  23.71  \\
    MetaAI SLIP Large   & 389298945 &  512 & 101.65 &  35.37 &  0.48  &  25.69 & 22.98\\
    ViT L/16                & 303301632 & 1024 &  88.30 &  50.01 &  0.21 & 22.49  & 22.69  \\
    InceptionRNV2          &  54306464 & 1536 &  68.36 &  53.23 &  0.62 & 20.11   & 19.61   \\
    InceptionV3            &  27161264 & 2048 & 174.87 &  77.88 &  0.56 & 30.46  & 25.55   \\
    ResNet50               &  25557032 & 2048 & 130.50 &  84.88 &  0.39 &  28.27 & 24.73   \\
    SWAV RN50               &  25557032 & 2048 & 140.06 & 133.86 &  0.43 &  26.17  & 37.58   \\
    VicReg RN50             &  23508032 & 2048 & 225.99 & 119.89 &  0.45&  29.87 &   33.52 \\
    VGG19                   & 143667240 & 4096 & 160.72 & 134.09 &  0.34 &  25.43 & 29.41   \\
    NASNet Large            &  84720150 & 4032 & 258.65 & 117.97 &  0.52 & 33.40 & 27.57  \\
    \bottomrule
  \end{tabular}
  }

\vspace{5pt}
 \caption{Model characteristics used as predictors in the regression analyses. 
For each model, we recorded the following predictors:\\ \textbf{\#Parameters --} total number of trainable parameters across all model layers. \\ \textbf{\#Features --} number of units in the model's final representational layer. \\
\textbf{Original features ED --} the effective dimensionality (ED; see Appendix \ref{appendix:effective_dimensionality}) of the model's final representational layer activation patterns in response to the 1,854 THINGS images, measured before any transformation. \\
We included these first three predictors to test whether there are inherent, non-alignment-related properties of each model that play a role in model misidentification.\\ 
\textbf{Transformed features ED --} the effective dimensionality of the model's final representational layer, obtained after applying the fitted and calibrated linear transformation $\mathbf{W}$.\\
\textbf{Alignment-induced representational shift --} the representational dissimilarity between original and transformed feature spaces derived from the representational similarity analysis (RSA; Fig.~\ref{suppfig:RSA}), which captures the representational shift induced by the linear transformation. \\
We included the latter two predictors to test whether there are representational alignment-induced geometric changes that are associated with model misidentification. \\
As an alternative to effective dimensionality, we also considered the following two predictors:\\
\textbf{Original features ID --} the intrinsic dimensionality (ID; see Appendix~\ref{appendix:intrinsic_dimensionality_gride}) of the model's final representational layer, obtained before any transformation. \\
\textbf{Transformed features ID --}  the intrinsic dimensionality of the model's final representational layer, obtained after applying the fitted and calibrated linear transformation $\mathbf{W}$.\\
To avoid multicollinearity, the effective dimensionality measures were included in the analysis reported in Table~\ref{tab:regression_results}, and 
the intrinsic dimensionality predictors were included in a separate regression analysis, reported in Table~\ref{tab:regression_results_ID}.\\
All model characteristics were extracted or estimated directly from the specific model implementations we used to ensure accuracy and reproducibility. Prior to conducting the regression analyses, the feature tables were expanded to account for all 380 pairwise model comparisons in each of the 30 simulations, yielding 11,400 observations and 10 predictors.}
   \label{tab:models_properties}
\end{table}

\begin{table}[htbp]
  \centering
  \sisetup{
    table-number-alignment = center,
    separate-uncertainty   = false,
  }
   \resizebox{\textwidth}{!}{
  \begin{tabular}{@{} 
      l 
      l
      S[table-format=2.3]   
      l                      
      S[table-format=1.4]   
    @{}}
    \toprule
    \textbf{Predictor}
    &\textbf{Candidate/Generator}
    & {$\boldsymbol{{\beta}}$}
    & {\textbf{95\,\% CI}}
    & \textbf{p-value}
     \\
    \midrule
    \textbf{Transformed features ED} & Data generating model & -0.455 & [–0.840, –0.175] & 0.02 \\
    \textbf{Transformed features ED} & Candidate model & -0.182 & [-0.967, 0.079]&      1 \\
    \textbf{Original features ED} & Data generating model & 0.069 & [–0.259, 0.181]& 1 \\
    \textbf{Original features ED} & Candidate model& -0.008 & [-0.371, 0.328]&         1 \\
    \textbf{\#Features} & Data generating model & -0.208 & [–0.419, 0.059] &    0.89 \\
    \textbf{\#Features} & Candidate model & 0.264 & [ -0.072, 0.764]& 0.924 \\
    \textbf{Alignment-induced representational shift} & Data-generating & -0.228 & [–0.442, –0.117] & 0.02 \\
    \textbf{Alignment-induced representational shift} & Candidate model & 0.495 & [ 0.286, 0.841] & 0.03 \\
    \textbf{\#Parameters} & Data generating model & 0.035 & [–0.122, 0.165] & 1 \\
    \textbf{\#Parameters} & Candidate model & -0.145 & [–0.454, 0.077] & 1 \\
    \bottomrule
  \end{tabular}
  }
  \\
\vspace{5pt}
\caption{
\textbf{Regression analysis identifying drivers of misidentification.} \\ 
To determine which model properties (Table~\ref{tab:models_properties}) may cause a model to be misidentified by its own simulated responses, we regressed the pairwise accuracy gap (retrained data-generating model \textminus\ candidate model; negative values indicate misidentification) on model features across 11,400 comparisons from 600 simulations (20 models × 30 datasets × 20 candidates $-$ 20 × 30 same model comparisons). We used the largest simulated dataset condition, each including 4.2 million training triplets. Predictors for both data-generators and candidates comprised:\\ \textbf{(1--2)} effective dimensionality (ED) on THINGS stimuli before and after linear alignment.\\ \textbf{(3)} The number of units in the model's final representational layer. \\\textbf{(4)} Alignment-induced representational shifts between original and transformed feature spaces derived from the model similarity matrix~(Fig.~\ref{suppfig:RSA}). \\ \textbf{(5)} Total number of trainable parameters.\\ 
As we used these predictors for every given pair of data-generating and candidate models in each simulation, we obtained a total of ten regression coefficients. 
An intercept term was included. To estimate uncertainty, we performed 10,000 bootstrap replicates at the model level: each replicate resampled with replacement, the 20 model identities from the original set to define both the data-generator and candidate pools (allowing duplicates), retained all 30 random seed initializations, and refit the regression to obtain empirical coefficient distributions. Table entries report the standardized coefficient ($\beta$), its 95\% percentile bootstrap confidence interval, and bootstrap-based p-value, corrected for the ten coefficients. Candidate model dissimilarity ($\beta = 0.495$, ${\text{p-value}} = 0.02$) 
and data-generating model post-alignment ED ($\beta=-0.455$, $\text{p-value} = 0.02$) are the strongest significant drivers of misidentification, highlighting the critical role of transformation-induced geometric shifts.}
  \label{tab:regression_results}
\end{table}

\begin{table}[htbp]
  \centering
  \sisetup{
    table-number-alignment = center,
    separate-uncertainty   = false,
  }
   \resizebox{\textwidth}{!}{
  \begin{tabular}{@{} 
      l 
      l
      S[table-format=2.3]   
      l                      
      S[table-format=1.4]   
    @{}}
    \toprule
    \textbf{Predictor}
    &\textbf{Candidate/Generator}
    & {$\boldsymbol{{\beta}}$}
    & {\textbf{95\,\% CI}}
    & \textbf{p-value}
     \\
    \midrule
    \textbf{Transformed features ID} & Data generating model & -0.401 & [–0.78, 0.28] & 0.2 \\
    \textbf{Transformed features ID} & Candidate model & -0.37 & [-0.76,-0.09 ]&      0.17 \\
    \textbf{Original features ID} & Data generating model & 0.08 & [–0.11, 0.34]& 1 \\
    \textbf{Original features ID} & Candidate model& -0.01 & [-0.21, 0.28]&         1 \\
    \textbf{\#Features} & Data generating model & -0.39 & [–0.71, -0.20] &    0.03 \\
    \textbf{\#Features} & Candidate model & 0.26 & [ 0.04, 0.60]& 0.3 \\
    \textbf{Alignment-induced representational shift} & Data-generating & -0.31 & [–0.76, –0.15] & 0.03 \\
    \textbf{Alignment-induced representational shift} & Candidate model & 0.65 & [ 0.41,1.11 ] & 0.004 \\
    \textbf{\#Parameters} & Data generating model & 0.20 & [–1.02, 1.8] & 1 \\
    \textbf{\#Parameters} & Candidate model & -1.2 & [–2.82,0.17 ] & 0.78 \\
    \bottomrule
  \end{tabular}
  }
  \\
\vspace{5pt}
\caption{\textbf{Regression Analysis Using Intrinsic Dimensionality Measures.} \\ 
To further explore which model properties (Table~\ref{tab:models_properties}) may cause a model to be misidentified by its own simulated responses, we ran the same regression analysis as in Table~\ref{tab:regression_results}, replacing the ED measures with ID measures.
None of the ID measures—for either the candidate or data-generating model, and for both the original and transformed features—were significant.
Alignment-induced representational shift became significant for both the candidate model ($\beta = 0.65$, ${\text{p-value}} = 0.004$) and the data-generating model ($\beta = -0.31$, ${\text{p-value}} = 0.03$), highlighting the effect of the alignment-induced shift on model misidentification.}
  \label{tab:regression_results_ID}
\end{table}

\begin{table}[htbp]
  \centering
  \renewcommand{\arraystretch}{1.4}
  \resizebox{\textwidth}{!}{
  \begin{tabular}{@{}llcccccc@{}}
    \toprule
    \textbf{Model} & \textbf{Objective} &
    \makecell{\textbf{Zero-shot}\\$\mathbf{W}=\mathbf{I}_{p\times p}$} &
    \makecell{\textbf{Diagonal}\\$\mathbf{W}\in\mathrm{Diag}(\mathbb{R}^{p\times p})$} &
    \makecell{\textbf{Rectangular$_{30}$}\\$\mathbf{W}\in\mathbb{R}^{p\times30}$} &
    \makecell{\textbf{Full}\\$\mathbf{W}\in\mathbb{R}^{p\times p}$} &
    \textbf{Reference} \\
    \midrule
    OpenCLIP ConvNeXt XL & Image--Text contrastive & 0.4267   &0.5065   & 0.5684 & 0.5599 & \cite{cherti_reproducible_2023} \\
    ALIGN                & Image--Text contrastive & 0.4252   &0.5274 & 0.5957   & 0.59324 & \cite{jia_scaling_2021} \\
    FLAVA Full           & Image--Text contrastive & 0.4738   &0.4738& 0.5833    & 0.5834 & \cite{singh_flava_2022} \\
    SigLIP2 B            & Image--Text contrastive & 0.4497   &0.5668 & 0.6114   & 0.6120 & \cite{tschannen_siglip_2025} \\
    AlexNet              & Image classification  & 0.4518     &0.4900& 0.5356    & 0.5300 & \cite{krizhevsky_imagenet_2012} \\
    BLIP 2               & Image--Text contrastive & 0.3758   &0.3758 & 0.4063   & 0.5192 & \cite{li_blip-2_2023} \\
    EVA02 CLIP Enormous  & Image--Text contrastive & 0.5030   &0.5686 & 0.6128   & 0.6108 & \cite{fang_eva-02_2024} \\
    PE Huge              & Image–Text Contrastive & 0.4803    &0.5723 & 0.6129   & 0.6100 & \cite{bolya_perception_2025} \\
    Image Bind Huge      & Multimodal Contrastive & 0.4621    &0.4621 & 0.5336   & 0.6069 & \cite{girdhar_imagebind_2023} \\
    OpenAI CLIP ViT      & Image–Text Contrastive & 0.4191    &0.5642 & 0.6057   & 0.60482& \cite{radford_learning_2021} \\
    
    \bottomrule
  \end{tabular}
  }

  \vspace{5pt}
  \caption{Prediction accuracies of the 10 additional models used in the small-scale model recovery simulations for the grouped model recovery results presented in figure~\ref{sup:grouped_recovery}. The models were evaluated on the THINGS odd-one-out triplets using cross-validated prediction accuracies under varying alignment flexibilities: zero-shot ($W = I_{p}$), diagonal, rank-30 rectangular transform, and full square transform. Features from each model's final representational layer (dimension $p$) were evaluated using 3-fold cross-validation over disjoint image sets. To ensure each model was evaluated under favorable conditions, we used its largest publicly available variant.}
  \label{tab:new_models_accuracy}
\end{table}

\clearpage
\newpage
\section*{NeurIPS Paper Checklist}
\begin{enumerate}

\item {\bf Claims}
    \item[] Question: Do the main claims made in the abstract and introduction accurately reflect the paper's contributions and scope?
    \item[] Answer: \answerYes{} 
    \item[] Justification: The abstract clearly states, "Model comparison experiments must be designed to balance the trade-off between predictive accuracy and specificity." The introduction directly refers to Fig.~\ref{fig:accuracy_metro} as evidence supporting the main claim that motivated our research. The introduction ends with clearly stated bullet points summarizing our contributions.
    \item[] Guidelines:
    \begin{itemize}
        \item The answer NA means that the abstract and introduction do not include the claims made in the paper.
        \item The abstract and/or introduction should clearly state the claims made, including the contributions made in the paper and important assumptions and limitations. A No or NA answer to this question will not be perceived well by the reviewers. 
        \item The claims made should match theoretical and experimental results, and reflect how much the results can be expected to generalize to other settings. 
        \item It is fine to include aspirational goals as motivation as long as it is clear that these goals are not attained by the paper. 
    \end{itemize}

\item {\bf Limitations}
    \item[] Question: Does the paper discuss the limitations of the work performed by the authors?
    \item[] Answer: \answerYes{} 
    \item[] Justification: Our discussion section includes a detailed \textbf{Limitations} section (see Section~\ref{par:limitations}).
    \item[] Guidelines:
    \begin{itemize}
        \item The answer NA means that the paper has no limitation while the answer No means that the paper has limitations, but those are not discussed in the paper. 
        \item The authors are encouraged to create a separate "Limitations" section in their paper.
        \item The paper should point out any strong assumptions and how robust the results are to violations of these assumptions (e.g., independence assumptions, noiseless settings, model well-specification, asymptotic approximations only holding locally). The authors should reflect on how these assumptions might be violated in practice and what the implications would be.
        \item The authors should reflect on the scope of the claims made, e.g., if the approach was only tested on a few datasets or with a few runs. In general, empirical results often depend on implicit assumptions, which should be articulated.
        \item The authors should reflect on the factors that influence the performance of the approach. For example, a facial recognition algorithm may perform poorly when image resolution is low or images are taken in low lighting. Or a speech-to-text system might not be used reliably to provide closed captions for online lectures because it fails to handle technical jargon.
        \item The authors should discuss the computational efficiency of the proposed algorithms and how they scale with dataset size.
        \item If applicable, the authors should discuss possible limitations of their approach to address problems of privacy and fairness.
        \item While the authors might fear that complete honesty about limitations might be used by reviewers as grounds for rejection, a worse outcome might be that reviewers discover limitations that aren't acknowledged in the paper. The authors should use their best judgment and recognize that individual actions in favor of transparency play an important role in developing norms that preserve the integrity of the community. Reviewers will be specifically instructed to not penalize honesty concerning limitations.
    \end{itemize}

\item {\bf Theory assumptions and proofs}
    \item[] Question: For each theoretical result, does the paper provide the full set of assumptions and a complete (and correct) proof?
    \item[] Answer: \answerYes{} 
    \item[] Justification: We included complete analytical development of the scalar-matrix shrinkage regularizer we introduced (see Appendix~\ref{appendix:regularization}). 
    \item[] Guidelines:
    \begin{itemize}
        \item The answer NA means that the paper does not include theoretical results. 
        \item All the theorems, formulas, and proofs in the paper should be numbered and cross-referenced.
        \item All assumptions should be clearly stated or referenced in the statement of any theorems.
        \item The proofs can either appear in the main paper or the supplemental material, but if they appear in the supplemental material, the authors are encouraged to provide a short proof sketch to provide intuition. 
        \item Inversely, any informal proof provided in the core of the paper should be complemented by formal proofs provided in appendix or supplemental material.
        \item Theorems and Lemmas that the proof relies upon should be properly referenced. 
    \end{itemize}

    \item {\bf Experimental result reproducibility}
    \item[] Question: Does the paper fully disclose all the information needed to reproduce the main experimental results of the paper to the extent that it affects the main claims and/or conclusions of the paper (regardless of whether the code and data are provided or not)?
    \item[] Answer: \answerYes{} 
    \item[] Justification: Section~\ref{sec:methods} presents a step-by-step explanation of our simulation procedure, and the appendix includes an illustration to clarify the simulation process (Fig.~\ref{fig:illustrations}).
    Tables: \ref{tab:models_properties},\ref{tab:probing_accuracies},\ref{tab:regression_results} captions elaborate further our regression analysis process. 

    \item[] Guidelines:
    \begin{itemize}
        \item The answer NA means that the paper does not include experiments.
        \item If the paper includes experiments, a No answer to this question will not be perceived well by the reviewers: Making the paper reproducible is important, regardless of whether the code and data are provided or not.
        \item If the contribution is a dataset and/or model, the authors should describe the steps taken to make their results reproducible or verifiable. 
        \item Depending on the contribution, reproducibility can be accomplished in various ways. For example, if the contribution is a novel architecture, describing the architecture fully might suffice, or if the contribution is a specific model and empirical evaluation, it may be necessary to either make it possible for others to replicate the model with the same dataset, or provide access to the model. In general. releasing code and data is often one good way to accomplish this, but reproducibility can also be provided via detailed instructions for how to replicate the results, access to a hosted model (e.g., in the case of a large language model), releasing of a model checkpoint, or other means that are appropriate to the research performed.
        \item While NeurIPS does not require releasing code, the conference does require all submissions to provide some reasonable avenue for reproducibility, which may depend on the nature of the contribution. For example
        \begin{enumerate}
            \item If the contribution is primarily a new algorithm, the paper should make it clear how to reproduce that algorithm.
            \item If the contribution is primarily a new model architecture, the paper should describe the architecture clearly and fully.
            \item If the contribution is a new model (e.g., a large language model), then there should either be a way to access this model for reproducing the results or a way to reproduce the model (e.g., with an open-source dataset or instructions for how to construct the dataset).
            \item We recognize that reproducibility may be tricky in some cases, in which case authors are welcome to describe the particular way they provide for reproducibility. In the case of closed-source models, it may be that access to the model is limited in some way (e.g., to registered users), but it should be possible for other researchers to have some path to reproducing or verifying the results.
        \end{enumerate}
    \end{itemize}

\item {\bf Open access to data and code}
    \item[] Question: Does the paper provide open access to the data and code, with sufficient instructions to faithfully reproduce the main experimental results, as described in supplemental material?
    \item[] Answer: \answerYes{} 
    \item[] Justification: A zip file containing the code and data, along with reproduction instructions, is provided with the paper submission. These resources will be shared with the general public upon publication through GitHub.
    \item[] Guidelines:
    \begin{itemize}
        \item The answer NA means that paper does not include experiments requiring code.
        \item Please see the NeurIPS code and data submission guidelines (\url{https://nips.cc/public/guides/CodeSubmissionPolicy}) for more details.
        \item While we encourage the release of code and data, we understand that this might not be possible, so “No” is an acceptable answer. Papers cannot be rejected simply for not including code, unless this is central to the contribution (e.g., for a new open-source benchmark).
        \item The instructions should contain the exact command and environment needed to run to reproduce the results. See the NeurIPS code and data submission guidelines (\url{https://nips.cc/public/guides/CodeSubmissionPolicy}) for more details.
        \item The authors should provide instructions on data access and preparation, including how to access the raw data, preprocessed data, intermediate data, and generated data, etc.
        \item The authors should provide scripts to reproduce all experimental results for the new proposed method and baselines. If only a subset of experiments are reproducible, they should state which ones are omitted from the script and why.
        \item At submission time, to preserve anonymity, the authors should release anonymized versions (if applicable).
        \item Providing as much information as possible in supplemental material (appended to the paper) is recommended, but including URLs to data and code is permitted.
    \end{itemize}

\item {\bf Experimental setting/details}
    \item[] Question: Does the paper specify all the training and test details (e.g., data splits, hyperparameters, how they were chosen, type of optimizer, etc.) necessary to understand the results?
    \item[] Answer: \answerYes{} 
    \item[] Justification: Full experimental details are provided in Section~\ref{sec:methods}, including a clear breakdown of our simulation methodology. Furthermore, our appendix includes all complementary extensions of our methodological work, presented in Appendices~\ref{appendix:calibration}, \ref{appendix:sampling_triplets}, \ref{appendix:notations}, and \ref{appendix:effective_dimensionality}.
    \item[] Guidelines:
    \begin{itemize}
        \item The answer NA means that the paper does not include experiments.
        \item The experimental setting should be presented in the core of the paper to a level of detail that is necessary to appreciate the results and make sense of them.
        \item The full details can be provided either with the code, in appendix, or as supplemental material.
    \end{itemize}

\item {\bf Experiment statistical significance}
    \item[] Question: Does the paper report error bars suitably and correctly defined or other appropriate information about the statistical significance of the experiments?
    \item[] Answer: \answerYes{}{} 
    \item[] Justification: As can be seen in Figures:~\ref{fig:accuracy_metro},\ref{fig:model_recovery_full_W},\ref{fig:ranking_figure}, we introduced error bars where applicable. All p-values were corrected for multiple comparisons using Bonferroni correction (see: Table~\ref{tab:regression_results} and Fig.~\ref{fig:accuracy_metro}).
    \item[] Guidelines:
    \begin{itemize}
        \item The answer NA means that the paper does not include experiments.
        \item The authors should answer "Yes" if the results are accompanied by error bars, confidence intervals, or statistical significance tests, at least for the experiments that support the main claims of the paper.
        \item The factors of variability that the error bars are capturing should be clearly stated (for example, train/test split, initialization, random drawing of some parameter, or overall run with given experimental conditions).
        \item The method for calculating the error bars should be explained (closed form formula, call to a library function, bootstrap, etc.)
        \item The assumptions made should be given (e.g., Normally distributed errors).
        \item It should be clear whether the error bar is the standard deviation or the standard error of the mean.
        \item It is OK to report 1-sigma error bars, but one should state it. The authors should preferably report a 2-sigma error bar than state that they have a 96\% CI, if the hypothesis of Normality of errors is not verified.
        \item For asymmetric distributions, the authors should be careful not to show in tables or figures symmetric error bars that would yield results that are out of range (e.g. negative error rates).
        \item If error bars are reported in tables or plots, The authors should explain in the text how they were calculated and reference the corresponding figures or tables in the text.
    \end{itemize}

\item {\bf Experiments compute resources}
    \item[] Question: For each experiment, does the paper provide sufficient information on the computer resources (type of compute workers, memory, time of execution) needed to reproduce the experiments?
    \item[] Answer: \answerYes{} 
    \item[] Justification: A description of the computing resources used can be found in Appendix~\ref{appendix:resources}
    \item[] Guidelines:
    \begin{itemize}
        \item The answer NA means that the paper does not include experiments.
        \item The paper should indicate the type of compute workers CPU or GPU, internal cluster, or cloud provider, including relevant memory and storage.
        \item The paper should provide the amount of compute required for each of the individual experimental runs as well as estimate the total compute. 
        \item The paper should disclose whether the full research project required more compute than the experiments reported in the paper (e.g., preliminary or failed experiments that didn't make it into the paper). 
    \end{itemize}
    
\item {\bf Code of ethics}
    \item[] Question: Does the research conducted in the paper conform, in every respect, with the NeurIPS Code of Ethics \url{https://neurips.cc/public/EthicsGuidelines}?
    \item[] Answer: \answerYes{} 
    \item[] Justification: We have reviewed the code of ethics and our research conforms in every respect with this code. Our work is a simulation study using existing, previously published human data. All of the utilized data and code resources are licensed as freely available to the general public. The images used for the illustration are from the CC0 version of the THINGS image dataset. 
    \item[] Guidelines:
    \begin{itemize}
        \item The answer NA means that the authors have not reviewed the NeurIPS Code of Ethics.
        \item If the authors answer No, they should explain the special circumstances that require a deviation from the Code of Ethics.
        \item The authors should make sure to preserve anonymity (e.g., if there is a special consideration due to laws or regulations in their jurisdiction).
    \end{itemize}

\item {\bf Broader impacts}
    \item[] Question: Does the paper discuss both potential positive societal impacts and negative societal impacts of the work performed?
    \item[] Answer: \answerNA{} 
    \item[] Justification: This study is a basic science research on the methodology of representational alignment. We do not see any direct societal impact. 
    \item[] Guidelines:
    \begin{itemize}
        \item The answer NA means that there is no societal impact of the work performed.
        \item If the authors answer NA or No, they should explain why their work has no societal impact or why the paper does not address societal impact.
        \item Examples of negative societal impacts include potential malicious or unintended uses (e.g., disinformation, generating fake profiles, surveillance), fairness considerations (e.g., deployment of technologies that could make decisions that unfairly impact specific groups), privacy considerations, and security considerations.
        \item The conference expects that many papers will be foundational research and not tied to particular applications, let alone deployments. However, if there is a direct path to any negative applications, the authors should point it out. For example, it is legitimate to point out that an improvement in the quality of generative models could be used to generate deepfakes for disinformation. On the other hand, it is not needed to point out that a generic algorithm for optimizing neural networks could enable people to train models that generate Deepfakes faster.
        \item The authors should consider possible harms that could arise when the technology is being used as intended and functioning correctly, harms that could arise when the technology is being used as intended but gives incorrect results, and harms following from (intentional or unintentional) misuse of the technology.
        \item If there are negative societal impacts, the authors could also discuss possible mitigation strategies (e.g., gated release of models, providing defenses in addition to attacks, mechanisms for monitoring misuse, mechanisms to monitor how a system learns from feedback over time, improving the efficiency and accessibility of ML).
    \end{itemize}
    
\item {\bf Safeguards}
    \item[] Question: Does the paper describe safeguards that have been put in place for responsible release of data or models that have a high risk for misuse (e.g., pretrained language models, image generators, or scraped datasets)?
    \item[] Answer: \answerNA{} 
    \item[] Justification: This work conducts a simulation study using pre-trained DNNs and previously acquired behavioral data.
    \item[] Guidelines:
    \begin{itemize}
        \item The answer NA means that the paper poses no such risks.
        \item Released models that have a high risk for misuse or dual-use should be released with necessary safeguards to allow for controlled use of the model, for example by requiring that users adhere to usage guidelines or restrictions to access the model or implementing safety filters. 
        \item Datasets that have been scraped from the Internet could pose safety risks. The authors should describe how they avoided releasing unsafe images.
        \item We recognize that providing effective safeguards is challenging, and many papers do not require this, but we encourage authors to take this into account and make a best faith effort.
    \end{itemize}

\item {\bf Licenses for existing assets}
    \item[] Question: Are the creators or original owners of assets (e.g., code, data, models), used in the paper, properly credited and are the license and terms of use explicitly mentioned and properly respected?
    \item[] Answer:\answerYes{} 
    \item[] Justification: THINGS images used in this work are under ``non-commercial research purposes only, according to fair use (see http://www.dmlp.org/legal-guide/fair-use)''. Images for the illustration are from the THINGS+ CC0 dataset, which contains 1,854 THINGS concept images with a public domain/CC0 license. All of the previous work we have relied on has been cited and mentioned in the paper. 
    \item[] Guidelines:
    \begin{itemize}
        \item The answer NA means that the paper does not use existing assets.
        \item The authors should cite the original paper that produced the code package or dataset.
        \item The authors should state which version of the asset is used and, if possible, include a URL.
        \item The name of the license (e.g., CC-BY 4.0) should be included for each asset.
        \item For scraped data from a particular source (e.g., website), the copyright and terms of service of that source should be provided.
        \item If assets are released, the license, copyright information, and terms of use in the package should be provided. For popular datasets, \url{paperswithcode.com/datasets} has curated licenses for some datasets. Their licensing guide can help determine the license of a dataset.
        \item For existing datasets that are re-packaged, both the original license and the license of the derived asset (if it has changed) should be provided.
        \item If this information is not available online, the authors are encouraged to reach out to the asset's creators.
    \end{itemize}

\item {\bf New assets}
    \item[] Question: Are new assets introduced in the paper well documented and is the documentation provided alongside the assets?
    \item[] Answer: \answerNA{} 
    \item[] Justification: No new assets are being released with this paper. Open data and code for reproducibility. 
    \item[] Guidelines:
    \begin{itemize}
        \item The answer NA means that the paper does not release new assets.
        \item Researchers should communicate the details of the dataset/code/model as part of their submissions via structured templates. This includes details about training, license, limitations, etc. 
        \item The paper should discuss whether and how consent was obtained from people whose asset is used.
        \item At submission time, remember to anonymize your assets (if applicable). You can either create an anonymized URL or include an anonymized zip file.
    \end{itemize}

\item {\bf Crowdsourcing and research with human subjects}
    \item[] Question: For crowdsourcing experiments and research with human subjects, does the paper include the full text of instructions given to participants and screenshots, if applicable, as well as details about compensation (if any)? 
    \item[] Answer: \answerYes{} 
    \item[] Justification: We used existing data from previously conducted and published crowd-sourcing experiments. Instructions and compensation are detailed in the original publications  \cite{hebart_things-data_2023},\cite{hebart_revealing_2020}.
    \item[] Guidelines:
    \begin{itemize}
        \item The answer NA means that the paper does not involve crowdsourcing nor research with human subjects.
        \item Including this information in the supplemental material is fine, but if the main contribution of the paper involves human subjects, then as much detail as possible should be included in the main paper. 
        \item According to the NeurIPS Code of Ethics, workers involved in data collection, curation, or other labor should be paid at least the minimum wage in the country of the data collector. 
    \end{itemize}

\item {\bf Institutional review board (IRB) approvals or equivalent for research with human subjects}
    \item[] Question: Does the paper describe potential risks incurred by study participants, whether such risks were disclosed to the subjects, and whether Institutional Review Board (IRB) approvals (or an equivalent approval/review based on the requirements of your country or institution) were obtained?
    \item[] Answer: \answerNA{} 
    \item[] Justification: No new human behavioral data was acquired.
    \item[] Guidelines:
    \begin{itemize}
        \item The answer NA means that the paper does not involve crowdsourcing nor research with human subjects.
        \item Depending on the country in which research is conducted, IRB approval (or equivalent) may be required for any human subjects research. If you obtained IRB approval, you should clearly state this in the paper. 
        \item We recognize that the procedures for this may vary significantly between institutions and locations, and we expect authors to adhere to the NeurIPS Code of Ethics and the guidelines for their institution. 
        \item For initial submissions, do not include any information that would break anonymity (if applicable), such as the institution conducting the review.
    \end{itemize}

\item {\bf Declaration of LLM usage}
    \item[] Question: Does the paper describe the usage of LLMs if it is an important, original, or non-standard component of the core methods in this research? Note that if the LLM is used only for writing, editing, or formatting purposes and does not impact the core methodology, scientific rigorousness, or originality of the research, declaration is not required.
    \item[] Answer:\answerNA{} 
    \item[] Justification: LLMs were used only for copyediting and proofreading. 
    \item[] Guidelines:
    \begin{itemize}
        \item The answer NA means that the core method development in this research does not involve LLMs as any important, original, or non-standard components.
        \item Please refer to our LLM policy (\url{https://neurips.cc/Conferences/2025/LLM}) for what should or should not be described.
    \end{itemize}

\end{enumerate}

\end{document}